\newtheorem{lemma}{Lemma}
\newtheorem{remark}{Remark}
\begin{document}
%
\title{Minimizing Age of Information in Cognitive Radio-based IoT Systems: Underlay or Overlay?
\thanks{The work of C. Zhai was supported by the Fundamental Research Funds of Shandong University (2017TB0011), and the open research fund of National Mobile Communications Research Laboratory, Southeast University (2019D09). The work of Y. Li was supported by the Australian Research Council under Grant DP150104019 and DP190101988. The work of B. Vucetic was supported in part by the Australian Research Council Laureate Fellowship grant number FL160100032.}
\thanks{Y. Gu, Y. Li and B. Vucetic are with School of Electrical and Information Engineering, The University of Sydney, Sydney, NSW 2006, Australia (email: \{yifan.gu, yonghui.li, branka.vucetic\}@sydney.edu.au).}
\thanks{H. Chen is with Department of Information Engineering, The Chinese University of Hong Kong (e-mail: he.chen@ie.cuhk.edu.hk).
}
\thanks{C. Zhai is with the School of Information Science and Engineering, Shandong University, Qingdao 266237, China, and also with the National Mobile Communications Research Laboratory, Southeast University, Nanjing 210096, China (e-mail: chaozhai@sdu.edu.cn).}
}
\author{Yifan Gu, He Chen, Chao Zhai, Yonghui Li, and Branka Vucetic
}

\maketitle

\begin{abstract}
We consider a cognitive radio-based Internet-of-Things (CR-IoT) network consisting of one primary IoT (PIoT) system and one secondary IoT (SIoT) system. The IoT devices of both the PIoT and the SIoT respectively monitor one physical process and send randomly generated status updates to their associated access points (APs). The timeliness of the status updates is important as the systems are interested in the latest condition (e.g., temperature, speed and position) of the IoT device. In this context, two natural questions arise: (1) How to characterize the timeliness of the status updates in CR-IoT systems? (2) Which scheme, overlay or underlay, is better in terms of the timeliness of the status updates. To answer these two questions, we adopt a new performance metric, named the age of information (AoI). We analyze the average peak AoI of the PIoT and the SIoT for overlay and underlay schemes, respectively. Simple asymptotic expressions of the average peak AoI are also derived when the PIoT operates at high signal-to-noise ratio (SNR). Based on the asymptotic expressions, we characterize a critical generation rate of the PIoT system, which can determine the superiority of overlay and underlay schemes in terms of the average peak AoI of the SIoT. Numerical results validate the theoretical analysis and uncover that the overlay and underlay schemes can outperform each other in terms of the average peak AoI of the SIoT for different system setups.
\end{abstract}

\begin{IEEEkeywords}
Cognitive radio, Internet-of-Things, status updates, age-of-information.
\end{IEEEkeywords}


\IEEEpeerreviewmaketitle

\section{Introduction}
Current wireless communication networks are mainly designed for human-operated terminals. The next generation of wireless communication networks will support the Internet of Things (IoT) technology that connects and controls autonomous machines. The next critical step in developing wireless networks will be to provide industry with ubiquitous ultra-reliable and low-latency communications for a large number of automated devices to bring transformative automation to industry, infrastructure and healthcare \cite{IoTbackground}. Due to the expected rapid growth in the number of IoT devices, one significant challenge of the IoT wireless networks is the scarcity of the radio spectrum. A promising solution to address the spectrum scarcity is the use of cognitive radio (CR) technology in the IoT, leading to CR-based IoT (CR-IoT). The CR-IoT has drawn considerable attention recently, see \cite{CR_IoT_Background1,CR_IoT_Background2} and references therein.

In the CR technology, a secondary system is allowed to access the spectrum formally allocated to a primary system, subject to little degradation introduced to the primary service. There are two fundamental access strategies for the secondary system, namely overlay and underlay schemes \cite{overunderoutage,outage2,outage3,overunder}. In the overlay scheme, the secondary system performs spectrum sensing and starts its transmission when the primary system is idle. In the underlay scheme, the secondary system can always access the spectrum subject to an interference constraint. However, it must satisfy the interference constraint at all times even if the primary system is idle.

Many emerging IoT services, such as smart cities, parking and environment monitoring, that could benefit from CR, will require status updates delivery \cite{CR_IoT_Background1,CR_IoT_Background2}. The status updates are the generated packets from the sensors which measure the status (e.g., temperature, humidity, position, and speed, etc.) of machines for automation and control purposes. Different from the conventional CR setups with the licensed spectrum being occupied by a less agile network without status updates delivery (e.g., TV and RADAR networks), we consider a contemporary scenario where both primary and secondary systems running agile IoT services with status updates delivery. This is motivated by the fact that more and more IoT services will run over licensed spectrum. For example, various IoT technologies over licensed cellular spectrum, e.g., enhanced machine type communication (eMTC) and narrow-band IoT (NB-IoT), have been standardized and deployed in practice \cite{SSIoT}. The primary IoT service may not always occupy the spectrum, and it provides an opportunity for the secondary IoT devices to sense the spectrum and transmit their data when needed.

The timeliness of the status updates is important for both primary and secondary systems running IoT services. For example, in smart parking applications, sensors measure the occupancy of a parking space and continuously report the measurements to the users. The timeliness of the generated status updates from the sensor is significant as the users are interested in the latest condition of the parking space, i.e., occupied or vacant. In fact, the conventional delay metric cannot adequately quantify the timeliness. The latter can be reflected by a recently introduced metric, termed the age of information (AoI) \cite{AoI_FCFS}. The AoI captures both the network delay and the generation time for each status update. If the most recently received status update carries the information sampled at time $r\left(t\right)$, the status update age at time $t$ can be defined as $t-r\left(t\right)$ \cite{AoI_FCFS}. Note that the delay can only reflect the timeliness at the instants when the status updates are successfully received, whereas the AoI measures the timeliness of information at the receiver at any given time. To our best knowledge, this is the first work that characterize the AoI performance of CR-IoT systems with age-sensitive traffics for both primary and secondary systems. 
\subsection{Related Work}
The outage probability of overlay and underlay schemes for different CR systems was studied in \cite{overunderoutage,outage2,outage3}. The authors in \cite{overunder} proposed a mixed strategy of overlay and underlay schemes to maximize the capacity of a CR system. Delay performance of CR systems for Poisson arrivals and bursty arrivals of packets was studied in \cite{CR_Delay}. \cite{CRIoT_sensing} focused on the energy efficiency, and \cite{CRIoT_throughput} considered the throughput maximization in designing spectrum sensing mechanisms for CR-IoT systems. Differently, we investigate the AoI performance of CR-IoT systems.

The AoI has attracted an upsurge of research interests very recently. The seminal work in \cite{AoI_FCFS} derived the average AoI of a First-Come-First-Serve (FCFS) queueing system for three different queueing models $M/M/1$, $M/D/1$, and $D/M/1$. It was shown in \cite{AoI_FCFS} that there exists an optimal generation rate for the status updates to minimize the average AoI, which is different from those that can maximize the throughput or minimize the delay. The authors in \cite{AoI_LCFS} extended \cite{AoI_FCFS} into a Last-Come-First-Serve (LCFS) queueing system and it was shown that compared with the FCFS, the LCFS can considerably reduce the average AoI by always transmitting the latest generated status updates. The authors in \cite{AEphremides_outoforder} investigated the average AoI of a more complex scenario, where the generated status updates may arrive at the destination out-of-order and the outdated status updates are obsolete. The average AoI of more complex queueing models, $M/G/1$ and $M/G/1/2$ were considered in \cite{LHuang_MG1} and \cite{AEphremides_MM12_deadline}, respectively. Due to the complexity and difficulty in analyzing the average AoI, Costa \textit{et al.} proposed a new performance metric, named the average peak AoI, which only counts the maximum values of the age achieved immediately prior to the reception of the status updates \cite{peakAoI}. It was concluded that the values of the average peak AoI and average AoI are closely related to each other.

In order to reduce the AoI and increase reliability, the classical automatic repeat request (ARQ) scheme was considered in \cite{AOIerror}, where the same information is retransmitted when it is erroneously received by the destination. The authors in \cite{YIFANAOI} studied the average AoI for the truncated ARQ by setting a maximum allowable transmission times for each status update. Hybrid ARQ (HARQ) schemes were analyzed in \cite{MG11HARQ,AoI_HARQ}. Seo and Choi introduced the AoI outage probability, which quantifies the probability that the peak AoI exceeds a threshold \cite{JSEO-Age-Outage}. It was demonstrated that the AoI outage probability is important for safety-critical applications. The update and transmitting strategies in an adversarial setting with jamming were studied in \cite{AoI_jamming1,AoI_jamming2}. In practice, multiple IoT devices may coexist in the same IoT network. The scheduling policies to minimize the AoI performance were studied in \cite{Ryates_multiplesources,R.Talak-Centralized-UnknownCSI-MAC,I.Kadota-Centralized-MAC,Z.Jiang-Decentralized,S.Kaul-Distributed-Centralized-MAC}. Finally, the AoI for energy harvesting systems were studied in \cite{AoI_EH1,AoI_EH2}.

There have been a few investigations into the AoI of CR networks \cite{AValehi_CR,SLeng_CR}. Valehi and Razi considered a CR system with framing that the generated packets can be aggregated to form a larger frame for transmission \cite{AValehi_CR}. The authors then optimized the energy efficiency of the secondary device by choosing the optimal number of aggregated packets under a given average AoI constraint. Leng and Yener investigated a CR system with an energy harvesting secondary device \cite{SLeng_CR}. The authors aimed to find optimal sensing and updating policy based on the energy availability and spectrum sensing results of the secondary user. The optimization problem was modelled by a partially observable Markov decision process. The major differences between our work and \cite{AValehi_CR,SLeng_CR} lie in two aspects. First of all, different from the FCFS model considered in \cite{AValehi_CR} and the ideal generate-at-will model considered in \cite{SLeng_CR}, we adopt the packet management policy such that the IoT devices can discard the stale status updates and serve new generated status updates, which is important for reducing the AoI. Furthermore, \cite{AValehi_CR} and \cite{SLeng_CR} only studied the case that the secondary system is age-sensitive. We consider a more challenging scenario that both primary and secondary systems are age-sensitive, where the tangled AoI evolution of both systems makes the theoretical analysis non-trivial.
\subsection{Contributions}
In this paper, we investigate the AoI performance of a CR-IoT system consisting of one primary IoT (PIoT) system and one secondary IoT (SIoT) system. The main contributions of this paper are summarized as follows:
\begin{itemize}
  \item We derive closed-form expressions of the average peak AoI for both the PIoT and the SIoT in an overlay scheme. In the overlay scheme, the SIoT can only access the spectrum when the PIoT is not transmitting and the evolution of the instantaneous AoI for the SIoT thus depends on the operation of the PIoT (i.e., the PIoT is transmitting or idle). This makes the analytical methods used in the existing papers no longer applicable and new methods are required.
  \item We derive closed-form expressions of the average peak AoI for the PIoT and the SIoT in an underlay scheme. In the underlay scheme, the PIoT and the SIoT may generate interference to each other and their AoI thus get entangled with each other. To tackle this challenge, we model the dynamic behaviors of the PIoT and the SIoT in the underlay scheme by a finite-state Markov Chain (MC).
  \item Simple asymptotic expressions of the average peak AoI for both the PIoT and the SIoT are also derived when the PIoT is operating at high signal-to-noise ratio (SNR). Based on the derived asymptotic expressions, we then characterize a critical generation rate of the PIoT which can determine whether the overlay or underlay scheme is more superior, in terms of the average peak AoI of the SIoT. Note that the average peak AoI of the PIoT is identical for both overlay and underlay schemes when the PIoT operates at high SNR. Simulation results validate our theoretical analysis, illustrate the PIoT performance degradation introduced by the underlay scheme, and show that the overlay and the underlay schemes may outperform each other in terms of the average peak AoI of the SIoT for different system setups. 
\end{itemize}

The rest of the paper is organized as follows. Section II introduces the system model, including the considered overlay and underlay schemes. The formal definition of the AoI, the peak AoI, and some other important intervals are also introduced. In Section III, we derive closed-form expressions for the average peak AoI of both the PIoT and the SIoT adopting the overlay scheme. In Section IV, we analyze the average peak AoI of the considered CR-IoT system using the underlay scheme, for both the PIoT and the SIoT. Section V provides the asymptotic analysis of the average peak AoI when the PIoT operates at high SNR. Numerical results are presented in Section VI to validate the theoretical analysis and compare the overlay and underlay schemes. Finally, conclusions are drawn in Section VII.

\textbf{\emph{Notation}}: Throughout this paper, we use $f_{X}(x)$ to denote the probability density function (PDF)/probability mass function (PMF) of a random variable $X$. $F_{X}(x)$ is the cumulative distribution function (CDF) of a random variable $X$. $Ei\left( {\cdot} \right)$ is the exponential integral function \cite[Eq. (8.21)]{Tableofintegral}. $\mathbb{E}\left[ A  \right]$ is the expectation of $A$ and $\mathbb{E}\left\{ {A\left| B \right.} \right\}$ represents the conditional expectation of $A$ under a given condition $B$. $P\left\{ {A\left| B \right.} \right\}$ is the conditional probability of $A$ under a given condition $B$. We use $\left( \cdot \right)^T$ to represent the transpose of a matrix or vector and ${\bf I}$ denotes the identity matrix.

\section{System Model}
\begin{figure}
\centering \scalebox{0.4}{\includegraphics{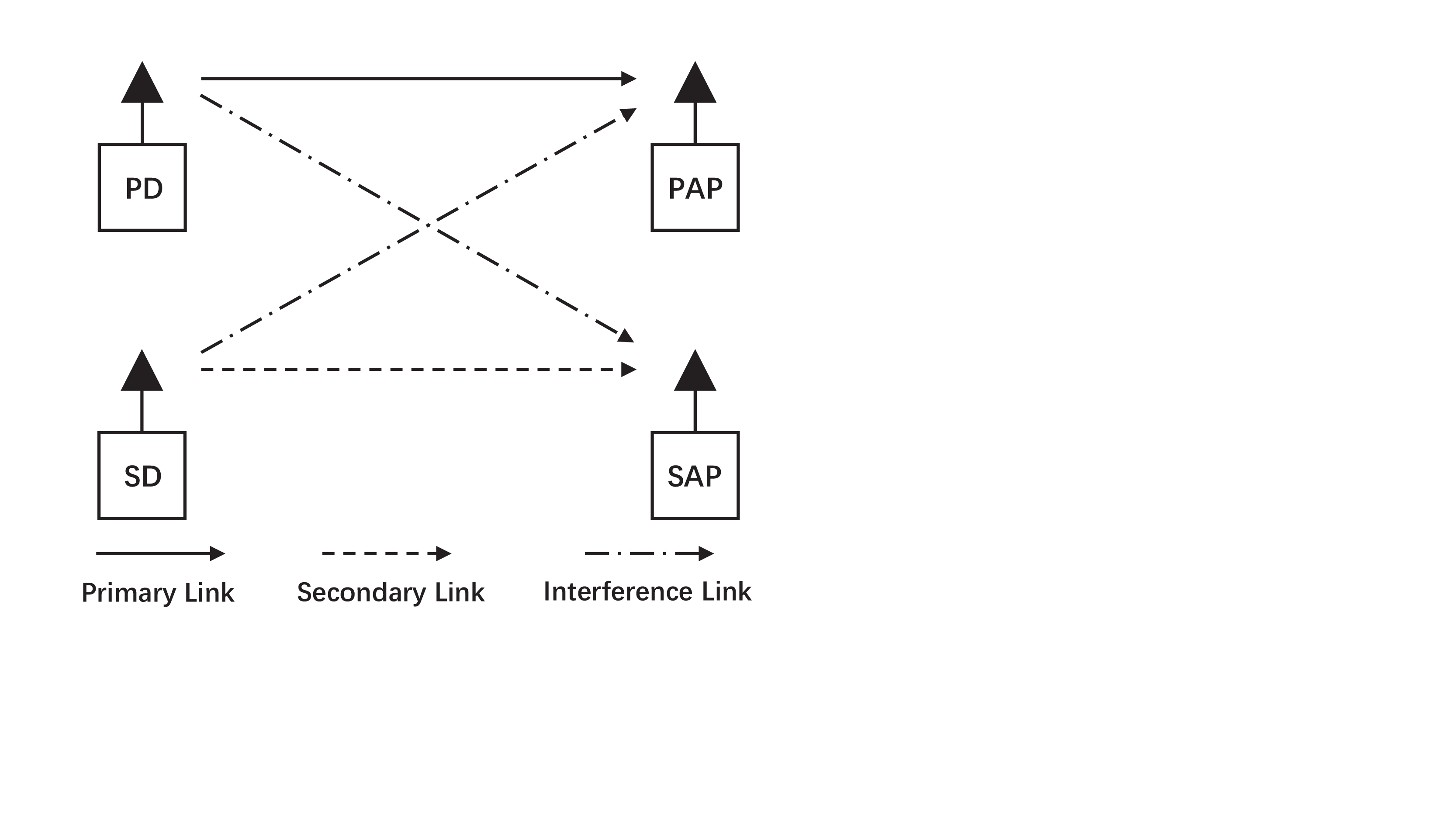}}
\caption{The considered cognitive radio-based IoT network. }\label{fig:systemmodel}
\end{figure}
We consider a CR-IoT network as depicted in Fig. \ref{fig:systemmodel}, which consists of one primary IoT device (PD), one primary access point (PAP), one secondary IoT device\footnote{The considered model can capture the IoT networks with multiple PDs and SDs working on orthogonal frequency channels, where there exists at most one SD in each channel to avoid causing severe interference to the PIoT. The scenario that multiple SDs coexist in each channel causing aggregate interference to the PIoT has been left as a future work.} (SD), and one secondary access point (SAP). We assume that the PD and the SD have their own status updates to be transmitted to their associated access points, the PAP and the SAP, respectively. Besides, the spectrum in the considered CR-IoT has only been licensed to the PIoT, and it is assumed that the PIoT is not aware of the existence of the SIoT. Thus, the SIoT can only acquire access to the spectrum when the PIoT is idle, or by adjusting the transmit power of the SD to strictly guarantee the interference constraint at the PAP.

We consider a slotted communication scenario where time is divided into slots with equal durations. Without loss of generality, we assume that the duration of each slot is normalized to unity. Moreover, all the links in the considered CR-IoT are assumed to undergo Rayleigh fading that the transmission of each status update of the PD and the SD occupies one time slot. The channel coefficients of all the links remain unchanged during each slot and change independently from one slot to another.

Errors may occur during the transmission of each status update and retransmission is performed to reduce the AoI for both the PIoT and the SIoT. For implementation simplicity, we adopt the classical ARQ and assume that the PAP and the SAP do not have a buffer and cannot combine the signals received from previous transmissions. Besides, the preemption of new status updates is considered for both the PIoT and the SIoT to reduce the AoI. Specifically, each round of retransmission contains the same information, and the PD (SD) keeps transmitting the current status update to the PAP (SAP) until the PAP (SAP) successfully decodes the status update or a new status update is generated at the PD (SD). When a new status update is generated at the PD (SD) and the PIoT (SIoT) is busy serving a current status update, the PD (SD) discards the current status update and starts to serve the new one. We consider two fundamental access schemes for the considered CR-IoT, namely overlay and underlay schemes, and they are elaborated in the subsequent subsection.
\subsection{Overlay and Underlay Schemes}
In the overlay scheme, we assume that the SIoT can perform perfect sensing \cite{spectrumsensing}, i.e., the probability of false alarm and miss detection is zero, and the amount of time used for sensing is negligible compared to the whole slot. Moreover, the SD can only transmit its status update to the SAP when the PIoT is idle, i.e., the PD has no status update to be transmitted to the PAP. Although the PIoT and the SIoT adopt the same classical ARQ, the transmission of the SIoT may be further interrupted by the PIoT when a new status update is generated at the PD. If this happens, the SIoT suspends its transmission until the service of the PIoT is finished. Moreover, if a new status update is generated at the SD while the PD is transmitting, the SD cannot start its transmission until the PIoT finishes its service.

In the underlay scheme, it is assumed that the SD can acquire the channel state information (CSI) of the SD-PAP link\footnote{The CSI can be obtained at the SD by periodically listening to the pilot signal broadcast by the PAP. The pilot sequence used by the PAP can be acquired by the SD from a band manager, who coordinates the information sharing between the PIoT and the SIoT \cite{overunderoutage,outage2,outage3,overunder}.} and adjust its transmit power to strictly satisfy the instantaneous interference constraint $I_C$ at the PAP \cite{spectrumsensing}. By satisfying the interference constraint, the SD is allowed to transmit its own status updates at any time even when the PD is transmitting. Compared with the overlay scheme, the underlay scheme acquires more access to the spectrum at a cost of limited transmit power. Besides, both the PIoT and the SIoT may suffer from the interference generated by each other.

By adopting the classical ARQ, we next give the outage probabilities of the PIoT and the SIoT for each round of transmission using the overlay and underlay schemes, respectively. We let $P_P$ and $P_S$ denote the peak transmit power constraints at the PD and the SD, respectively. As these outage probabilities have been widely studied in the existing literature, we only give the final results for brevity.
\subsubsection{Overlay Scheme}
In the overlay scheme, the PD and the SD use the peak transmit power to transmit the status updates. Let $\Omega_{PP}$ and $\Omega_{SS}$ denote the average channel power gains between the PD and the PAP, the SD and the SAP, respectively. The outage probabilities of the PIoT and the SIoT for each round of transmission in the overlay scheme are given by
\begin{equation}\label{Poutage}
{{\varphi} _{O,P}} = 1 - \exp \left( { - {{{N_0}\sigma_P} \over {{P_P}{\Omega _{PP}}}}} \right),
\end{equation}
\begin{equation}\label{SoutageO}
{\varphi_{O,S}} = 1 - \exp \left( { - {{{N_0}\sigma_S} \over {{P_S}{\Omega _{SS}}}}} \right),
\end{equation}
where $N_0$ is the power of the additive white Gaussian noise at the receiver side, $\sigma_P=2^{R_P}-1$ and $\sigma_S=2^{R_S}-1$ are the outage thresholds, and $R_P,R_S$ are the transmission rates of the PIoT and the SIoT, respectively.
\subsubsection{Underlay Scheme}
In the underlay scheme, for the case that either the PD or the SD is transmitting, there is no interference and the outage probability of the PIoT is the same as the overlay scheme given in (\ref{Poutage}), i.e., ${\varphi_{U,P}}={{\varphi} _{O,P}}$.

For the SIoT, the SD needs to adjust its transmit power to satisfy the interference constraint $I_C$. Let $H_{SP}$ denote the instantaneous channel power gain between the SD and the PAP. By considering the peak transmit power and interference constraint, the transmit power of the SD is given by $P= \min \left\{ {I_C \over H_{SP}}, P_S \right\}$. According to \cite[Eq. (8)]{outage3}, without interference from the PIoT, the outage probability of the SIoT using the underlay scheme is given by
\begin{equation}\label{SoutageU}
\begin{split}
{\varphi_{U,S}} &= 1 - \left[ {1 - {{\exp \left( { - {{{I_C}} \over {{P_S}{\Omega _{SP}}}}} \right)} \over {{{{\Omega _{SS}}{I_C}} \over {{\Omega _{SP}}\sigma_S{N_0}}} + 1}}} \right]\exp \left( { - {{\sigma_S{N_0}} \over {{P_S}{\Omega _{SS}}}}} \right),
\end{split}
\end{equation}
where $\Omega_{SP}$ is the average channel power gain between the SD and the PAP.

For the other case that both the PD and the SD transmit at the same slot, the PIoT and the SIoT generates interference to each other. Due to the peak transmit power constraint of the SD, the interference at the PAP can be expressed as $I_P= \min \left\{ I_C, P_S H_{SP} \right\}$. Let $H_{PP}$ denote the instantaneous channel power gain for the PD-PAP link, the outage probability of the PIoT with interference can be calculated as
\begin{equation}\label{PoutageI}
\begin{split}
{{\mathord{\buildrel{\lower3pt\hbox{$\scriptscriptstyle\frown$}}\over
 {\varphi}  } }_{U,P}}&=\Pr \left\{ {{{{P_P}{H_{PP}}} \over {{I_P} + {N_0}}} < \sigma_P} \right\}\\
 &=\int_0^{{{{I_c}} \over {{P_s}}}} {{f_{{H_{SP}}}}\left( x \right)} {F_{{H_{PP}}}}\left( {{{\sigma_P\left( {{P_s}x + {N_0}} \right)} \over {{P_p}}}} \right)dx \\
 &\quad + \int_{{{{I_c}} \over {{P_s}}}}^\infty  {{f_{{H_{SP}}}}\left( x \right)} {F_{{H_{PP}}}}\left( {{{\sigma_P\left( {{I_C} + {N_0}} \right)} \over {{P_p}}}} \right)dx\\
 &=1 - {{\exp \left[ { - {{{I_C}} \over {{P_S}{\Omega _{SP}}}} - {{\left( {{I_C} + {N_0}} \right){\sigma_P}} \over {{P_P}{\Omega _{PP}}}}} \right] + {{\exp \left( { - {{{N_0}{\sigma_P}} \over {{P_P}{\Omega _{PP}}}}} \right)} \over {{P_S}{\Omega _{SP}}{\sigma_P}}}} \over {1 + {{{P_P}{\Omega _{PP}}} \over {{P_S}{\Omega _{SP}}{\sigma_P}}}}},
 \end{split}
\end{equation}
where ${{f_{{H_{SP}}}}\left( x \right)} = {{\exp \left( { - {x \over {{\Omega _{SP}}}}} \right)} \over {{\Omega _{SP}}}}$ and ${F_{{H_{PP}}}}\left( x \right)=1 - \exp \left( { - {x \over {{\Omega _{PP}}}}} \right)$ denote the PDF and CDF, respectively, and the two integrals in (\ref{PoutageI}) are solved with the help of \cite[Eq. (3.351.1)]{Tableofintegral}.

At last, the outage probability of the SIoT with interference can be expressed as \cite{outage2}
\begin{align}
{{{{\mathord{\buildrel{\lower3pt\hbox{$\scriptscriptstyle\frown$}}\over
 \varphi } }}}_{U,S}} & =1 - {{\exp \left( {{{{N_0}{\sigma_S}} \over {{P_S}{\Omega _{SS}}}}} \right)} \over {{\Omega _{PS}}}} \left[ {{{1 - \exp \left( { - {{{I_C}} \over {{P_S}{\Omega _{SP}}}}} \right)} \over {{1 \over {{\Omega _{PS}}}} + {{{\sigma_S}{P_P}} \over {{P_S}{\Omega _{SS}}}}}} }\right. \nonumber \\
 &\quad \left.{- {{{I_C}{\Omega _{SS}}\exp \left( \eta  \right)Ei\left( { - \eta } \right)} \over {{P_P}{\Omega _{PS}}\exp \left( {{{{I_C}} \over {{P_S}{\Omega _{SP}}}}} \right){\sigma_S}}}} \right], \label{SoutageUI}\\ \nonumber
 \end{align}
where $\Omega_{PS}$ is the average channel power gain between the PD and the SAP, and $\eta  = \left( {{{{I_C}{\Omega _{SS}}} \over {{\Omega _{SP}}}} + {\sigma_S}{N_0}} \right)\left( {{1 \over {{P_S}{\Omega _{SS}}}} + {1 \over {{P_P}{\Omega _{PS}}{\sigma_S}}}} \right)$.
\subsection{The Age of Information}\label{AoIdefinition}
\begin{figure}[htb]
\centering \scalebox{0.5}{\includegraphics{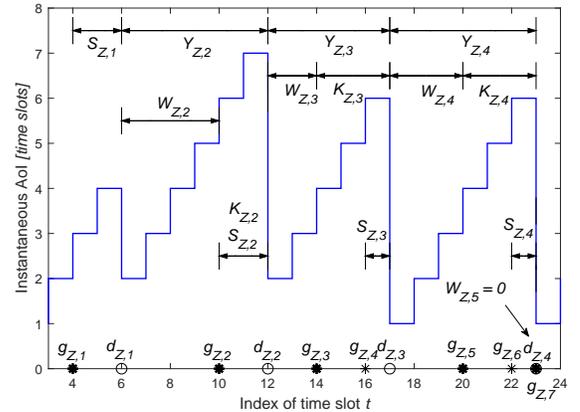}}
\caption{The instantaneous AoI for the PIoT (SIoT) using the overlay (underlay) scheme. }\label{fig:newaoi}
\end{figure}
As in \cite{Z.Jiang-Decentralized}, we assume that new status updates are generated at the PD and the SD according to an independent and identically distributed (i.i.d.) Bernoulli process with rates $p$ and $q$. Specifically, new status updates are generated at the PD and the SD with probabilities $p$ and $q$ at the beginning of each time slot, respectively. We now give some important definitions of the considered PIoT and the SIoT using underlay and overlay schemes. Note that the following definitions are used for both the overlay and underlay schemes. We let $g_{Z,j}$ denote the generation time of the $j$-th status update, where $Z \in \left\{P, S \right\}$ represents the PIoT and the SIoT, respectively. Recall that preemption is considered and a new generated status update preempts the current status update (i.e., the system discards the current status update and starts to serve the new one), the generated status updates may not always be successfully decoded at the access points due to preemption. We denote by ${d_{Z,i}}$ the departure time of the $i$-th status update that is successfully decoded at the PAP (SAP). For example, in Fig. \ref{fig:newaoi}, the status update generated at $g_{Z,3}$ is discarded due to preemption, and the third successfully received status update at ${d_{Z,3}}$ is the status update generated at $g_{Z,4}$. We let $S_{Z,i}$ be the service time of the $i$-th status update that is successfully received by the PAP (SAP). We use $r_{Z}\left(t\right)$ to represent the generation time of the most recently received status update at time slot $t$. The instantaneous AoI for the considered PIoT and SIoT at time slot $t$ is defined as
\begin{equation}
A_{Z}\left(t\right)=t-r_{Z}\left(t\right).
\end{equation}
In order to explain the above definition of AoI clearly, we depict an example for the instantaneous values of AoI in Fig. \ref{fig:newaoi} for 21 consecutive time slots. Since we consider a slotted communication scenario and the status updates can only be generated and received at discrete times, the instantaneous value of the AoI thus follows a staircase and changes only at the end of each slot.

We now give an example of instant AoI calculation by using Fig. \ref{fig:newaoi}. Here we explain how the AoI evolves starting from the 12-th time slot until the 17-th time slot. Specifically, the AoI drops to 2 in the 12-th time slot after a successful reception because the received status update is generated in the 10-th time slot. Because there is no status update generated until the 14-th time slot, the AoI grows 1 over each time slot, reaching a value of 4 in the 14-th time slot. After that, the AoI keeps increasing because the generated status update is not successfully delivered, reaching 6 in the 16-th time slot. In the 16-th time slot, a new status update is generated, and it is successfully delivered to the destination in the same time slot. The AoI is thus updated to the age of the received status update, e.g., 1, in the 17-th slot.

We follow \cite{AOIerror, peakAoI} and adopt the average peak AoI as the performance metric. Compared with the average AoI, the average peak AoI is more analytically tractable and it is closely related to the average AoI \cite{peakAoI}. The set of peak AoIs is the set of ``peak" values in the evolution of the instantaneous AoI, e.g., the $5,11,16,22$-th slots in Fig. \ref{fig:newaoi}. Mathematically, the $i$-th peak AoI of the PIoT and the SIoT is defined as
\begin{equation}\label{instantaneouspeakAoI}
{\Delta _{Z,i}} = {A_Z}\left( {{d_{Z,i}} - 1} \right).
\end{equation}
We now define some important intervals before calculating the average peak AoI of the PIoT and the SIoT.
\subsubsection{The interval $W_{Z,i}$}We denote $W_{Z,i}$ as the waiting time of the PIoT and the SIoT. $W_{Z,i}$ is defined as the time elapsed since the last successful reception of a status update, until the PD (SD) generates a new status update again. $W_{Z,i}$ can be expressed as
\begin{equation}
W_{Z,i}= \min \left\{ {{g_{Z,j}}\left| {{g_{Z,j}} \ge {d_{Z,i - 1}}} \right.} \right\} - {d_{Z,i - 1}}.
\end{equation}
Note that the waiting time $W_{Z,i}$ may be zero, corresponding to the case that the PIoT (SIoT) finishes the service of a status update in the current time slot and immediately generates a new status update at the beginning of the next time slot, e.g., $W_{Z,5}=0$ in Fig. \ref{fig:newaoi}.
\subsubsection{The interval $K_{Z,i}$}$K_{Z,i}$ is defined as the time elapsed since the first generated status update after the last successful reception, until the PAP (SAP) successfully decodes a status update again. $K_{Z,i}$ is formally defined as
\begin{equation}
K_{Z,i}={d_{Z,i}}- \min \left\{ {{g_{Z,j}}\left| {{g_{Z,j}} \ge {d_{Z,i - 1}}} \right.} \right\}.
\end{equation}
For a better understanding, we highlight the first generated status updates after the last successful reception in bold in Fig. \ref{fig:newaoi}. Note that the PIoT (SIoT) is idle during each interval $W_{Z,i}$ and busy, i.e., transmitting, during each interval $K_{Z,i}$. Besides, $K_{Z,i}$ is different from the service time $S_{Z,i}$ because preemption and interruption may happen during the interval $K_{Z,i}$, e.g., $K_{Z,3}>S_{Z,3}$ in Fig. \ref{fig:newaoi}. If no preemption or interruption happens, $K_{Z,i}$ and $S_{Z,i}$ are identical, e.g., $K_{Z,2}=S_{Z,2}$ as in Fig. \ref{fig:newaoi}.
\subsubsection{The interval $Y_{Z,i}$}The interdeparture interval between two successfully received status updates is given by
\begin{equation}
Y_{Z,i}={d_{Z,i}}-{d_{Z,i - 1}}.
\end{equation}
With the definition of $W_{Z,i}$ and $K_{Z,i}$, as shown in Fig. \ref{fig:newaoi}, $Y_{Z,i}$ can be alternatively expressed as
\begin{equation}
Y_{Z,i}=W_{Z,i}+K_{Z,i}.
\end{equation}
This result will be used in analyzing the average peak AoI in the next section. Based on the above definitions and (\ref{instantaneouspeakAoI}), the $i$-th peak AoI for the PIoT and the SIoT can be expressed as
\begin{equation}
{\Delta _{Z,i}} = {S_{Z,i - 1}}+{Y_{Z,i}}-1.
\end{equation}
For the considered CR-IoT, the sequences $\left\{S_{Z,1}, S_{Z,2},\cdots\right\}$ and $\left\{Y_{Z,1}, Y_{Z,2},\cdots\right\}$ form i.i.d. processes. Since we are interested in the average value of the peak AoI, we now drop the subscript index of the intervals. As such, the average peak AoI for the PIoT and the SIoT is given by
\begin{equation}\label{AoIexpression}
{{\bar \Delta }_Z}=\mathbb E \left[{S_{Z}}\right] +\mathbb E \left[{Y_{Z}}\right]-1.
\end{equation}

\section{AoI Analysis of the Overlay Scheme}
In this section, we analyze the average peak AoI of the PIoT and the SIoT for the overlay scheme by deriving the terms $\mathbb{E} \left[{S_Z}\right]$ and $\mathbb{E} \left[{Y_Z}\right]$ in (\ref{AoIexpression}). In the overlay scheme, the AoI performance of the SIoT depends on that of the PIoT because the SD cannot transmit when the PD is transmitting. On the other hand, the AoI performance of the PIoT is not influenced by the SIoT.
\subsection{The PIoT System}
\subsubsection{The evaluation of $\mathbb{E} \left[{Y_P}\right]$}
We first consider the expectation of $Y_P$. We notice that it is difficult to derive $\mathbb{E} \left[{Y_P}\right]$ directly due to its complicated behaviors that multiple status updates may be generated and preempted within each interval $Y_P$. To proceed, we leverage the fact that $Y_P=W_P+K_P$ and derive the expectation of $Y_P$ by calculating the expectations of $W_P$ and $K_P$. Since new status updates are generated at the PD according to a Bernoulli Process with a rate of $p$, the PMF of the waiting time $W_P$ is given by
\begin{equation}\label{PMFWP}
f_{W_P}\left( k \right) = \left(1-p\right)^{k}p.
\end{equation}
We can then obtain that $\mathbb{E} \left[{W_P}\right]={{1 - p} \over p}$.
According to the definition of $K_P$, we can deduce that $K_P$ is a group of consecutive information transmission slots, in which the PAP can only decode the status update correctly in the last slot. The PMF of $K_P$ is thus given by
\begin{equation}\label{PMFKP}
f_{K_P}\left( k \right) = {\varphi _{O,P}}^{k-1}\left(1-{\varphi _{O,P}}\right).
\end{equation}
From the PMF, we have $\mathbb{E} \left[{K_P}\right]={1 \over {1 - {\varphi _{O,P}}}}$, and the expectation of $Y_P$ is given by
\begin{equation}\label{YP}
\mathbb{E} \left[{Y_P}\right]=\mathbb{E} \left[{W_P}\right]+\mathbb{E} \left[{K_P}\right]={{1 - p} \over p} + {1 \over {1 - {\varphi _{O,P}}}}.
\end{equation}
\subsubsection{The evaluation of $\mathbb{E} \left[{S_P}\right]$}
We now derive the expectation of the service time $\mathbb{E} \left[{S_P}\right]$ for the PIoT system. The PMF of the service time $S_P$ is given in the following lemma.
\begin{lemma}\label{lemma1}
The PMF of the service time $S_P$ is given by
\begin{equation}\label{PMFSP}
\begin{split}
{f_{{S_P}}}\left( k \right)= {\varphi _{O,P}}^{k - 1}{\left( {1 - p} \right)^{k - 1}}\left( {1 - {\varphi _{O,P}} + p{\varphi _{O,P}}} \right).
\end{split}
\end{equation}
\end{lemma}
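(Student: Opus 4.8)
The plan is to identify $S_P$ with the sojourn time of a \emph{tagged} status update of the PD --- one that is eventually decoded by the PAP --- and to obtain its PMF by elementary conditioning on that success event. A preliminary point is that $S_{P,i}$ is, by definition, recorded only for updates that \emph{are} decoded by the PAP, so the quantity sought is the sojourn of a generic update conditioned on eventual success; the regenerative structure (the PIoT becomes idle after each decoding and the dynamics restart) is what makes $\{S_{P,i}\}$ an i.i.d.\ sequence drawn from that conditional law.

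First I would set up a slot-by-slot description. Relabel as slot $1$ the slot at whose beginning the tagged update is generated; it is transmitted in slot $1$ and retransmitted in each subsequent slot until either the PAP decodes it (probability $1-\varphi_{O,P}$ per slot, cf.\ (\ref{Poutage})) or a new update is generated at the PD at the beginning of the \emph{next} slot (probability $p$), which preempts it. Hence the event $\{S_P=k\}$ together with ``the tagged update is the one decoded'' forces exactly the trajectory: the transmission fails in slots $1,\dots,k-1$; no new update is generated at the beginnings of slots $2,\dots,k$ (otherwise the tagged update is preempted before it can be decoded in slot $k$); and the transmission succeeds in slot $k$. Since the channel outcomes are i.i.d.\ across slots and independent of the Bernoulli arrivals, this yields
\begin{equation*}
\Pr\{S_P=k,\ \mathrm{decoded}\}={\varphi_{O,P}}^{k-1}(1-p)^{k-1}\left(1-\varphi_{O,P}\right).
\end{equation*}

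Second, I would normalise: summing the geometric series over $k\ge 1$ gives $\Pr\{\mathrm{decoded}\}=(1-\varphi_{O,P})/(1-\varphi_{O,P}(1-p))$, so $f_{S_P}(k)=({\varphi_{O,P}}(1-p))^{k-1}(1-\varphi_{O,P}(1-p))$, which is (\ref{PMFSP}) after writing $1-\varphi_{O,P}(1-p)=1-\varphi_{O,P}+p\varphi_{O,P}$. A useful sanity check is that the complementary event --- eventual preemption --- has probability $\varphi_{O,P}p/(1-\varphi_{O,P}(1-p))$, and the two add to one. Equivalently, one can observe that in every slot the tagged update's service terminates with probability $1-\varphi_{O,P}(1-p)$ and that, given termination, it terminates in a success with the slot-independent probability $(1-\varphi_{O,P})/(1-\varphi_{O,P}(1-p))$; hence conditioning on eventual success leaves $S_P$ geometric with parameter $1-\varphi_{O,P}(1-p)$. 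The one place where care is needed --- and the only real obstacle --- is the slot accounting: the no-preemption constraint must be imposed on the beginnings of slots $2,\dots,k$ but not slot $1$ (whose beginning is precisely when the tagged update is generated), and an arrival at the beginning of slot $k$ must be treated as preempting the tagged update rather than coexisting with its decoding at the end of slot $k$. These boundary conventions are exactly what pin down the exponents $k-1$ on both $\varphi_{O,P}$ and $(1-p)$; once the trajectory probability is correct, the remainder is a one-line computation.
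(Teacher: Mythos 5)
Your proposal is correct and follows essentially the same route as the paper: you compute the unnormalised trajectory probability $\varphi_{O,P}^{k-1}(1-p)^{k-1}(1-\varphi_{O,P})$ (the paper's $\chi_k$) and then divide by the total success probability $(1-\varphi_{O,P})/(1-\varphi_{O,P}+p\varphi_{O,P})$, exactly as in Appendix A. The boundary conventions you identify (no arrival at the beginnings of slots $2,\dots,k$) match the paper's accounting, and your sanity check on the complementary preemption probability is a nice, harmless addition.
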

\begin{proof}
See Appendix \ref{app1}.
\end{proof}
With the PMF of $S_P$, $\mathbb{E} \left[{S_P}\right]$ can be derived as
\begin{equation}\label{SP}
\mathbb{E} \left[{S_P}\right]=\sum\limits_{k = 1}^\infty  {{f_{{S_P}}}\left( k \right)k}  = {1 \over {1 - {\varphi _{O,P}} + p{\varphi _{O,P}}}},
\end{equation}
where we use \cite[Eq. (0.231.2)]{Tableofintegral} to simplify the summation term.

Substituting (\ref{SP}) and (\ref{YP}) into (\ref{AoIexpression}), after some mathematical manipulations, the average peak AoI of the PIoT using the overlay scheme is obtained as
\begin{equation}\label{AoIOP}
{{\bar \Delta }_P}={{{\varphi _{O,P}} - p{\varphi _{O,P}}} \over {1 - {\varphi _{O,P}} + p{\varphi _{O,P}}}} + {{1 - {\varphi _{O,P}} + p{\varphi _{O,P}}} \over {p - p{\varphi _{O,P}}}}.
\end{equation}
\subsection{The SIoT system}
We now turn to the derivation of the average peak AoI for the SIoT by evaluating the terms $\mathbb{E} \left[{Y_S}\right]$ and $\mathbb{E} \left[{S_S}\right]$ in (\ref{AoIexpression}).
\subsubsection{The evaluation of $\mathbb{E} \left[{Y_S}\right]$}
Since $Y_S=W_S+K_S$, we also calculate the terms $\mathbb E \left[ W_S  \right]$ and $\mathbb E \left[ K_S  \right]$ in order to characterize $\mathbb E \left[ Y_S  \right]$. $W_S$ follows a Bernoulli Process with a rate of $q$, and the PMF of $W_S$ is given by
\begin{equation}\label{PMFWS}
f_{W_S}\left( k \right) = \left(1-q\right)^{k}q.
\end{equation}
With the PMF of $W_S$, we can obtain that $\mathbb E \left[ W_S  \right]={{1 - q} \over q}$.

Recall that the term $\mathbb E\left[K_P\right]$ for the PIoT is calculated by first deriving the PMF of $K_P$. However, in the overlay scheme, the SD has to suspend its transmission when the PIoT transmits its status updates. As a result, the SIoT may suspend its service several times during each interval $K_S$ and the duration of each suspension depends on whether new status updates are generated at the PIoT and the service time of the PIoT, which makes it complicated to derive the PMF of $K_S$. Inspired by \cite{AOIerror}, we use a recursive method to evaluate $\mathbb E\left[K_S\right]$, which does not require the PMF of $K_S$. Note that the considered case is much more complicated than the system considered in \cite{AOIerror}, because of the co-existence and different priorities between the PIoT and the SIoT. 

We notice that $K_S$ evolves differently for different initial slots. Specifically, if in the initial slot of $K_S$ the PIoT is idle, the SIoT can immediately start transmitting its status update. On the other hand, if in the initial slot of $K_S$ the PIoT is busy, the SIoT must wait until the service of the PIoT is finished before starting its own transmission. We now denote by $I_{K_S}$ and $B_{K_S}$ the events that the initial slot of $K_S$ is an idle slot or a busy slot, respectively. The expectation of $K_S$ for the SIoT using the overlay scheme can then be expressed as
\begin{equation}\label{overKS}
\mathbb E \left[ {K_S } \right] =\Pr\left\{I_{K_S}\right\}\mathbb E \left[ {K_S \left| I_{K_S} \right.} \right] + \Pr\left\{B_{K_S}\right\}\mathbb E \left[ {K_S \left| B_{K_S} \right.} \right].
\end{equation}
The two probabilities terms in (\ref{overKS}) are given in the following lemma.
\begin{lemma}\label{lemma2}
The probabilities that the initial slot of $K_S$ in the overlay scheme is an idle slot or a busy slot are given by
\begin{equation}\label{IK}
\Pr\left\{I_{K_S}\right\}={{\left( {1 - p} \right)\left( {1 - {\varphi _{O,P}} + {\varphi _{O,P}}q} \right)} \over {\left( {1 - p} \right)\left( {1 - {\varphi _{O,P}} + {\varphi _{O,P}}q} \right) + p}},
\end{equation}
\begin{equation}\label{BK}
\begin{split}
\Pr\left\{B_{K_S}\right\}&={p \over {\left( {1 - p} \right)\left( {1 - {\varphi _{O,P}} + {\varphi _{O,P}}q} \right) + p}}.
\end{split}
\end{equation}
\end{lemma}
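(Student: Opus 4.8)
The plan is to exploit the fact that, under the overlay scheme, the PIoT is completely \emph{autonomous} --- its busy/idle status at every slot depends only on the PD's own generations and decodings, never on the SIoT. Consequently the only two objects that matter are (i) a binary ``state'' of the PIoT (idle $I$ or busy $B$), and (ii) the random slot at which the SIoT's interval $K_S$ starts, and the second of these is independent of the first. I would therefore reduce the claim to a one-step Markov analysis of the PIoT plus a geometric averaging.

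The first step is to argue that the PIoT's state sequence is a two-state discrete-time Markov chain. From an idle slot the PIoT becomes busy in the next slot precisely when the PD generates a new update, which occurs with probability $p$; otherwise it remains idle. From a busy slot the PIoT turns idle in the next slot only if the current transmission is decoded (probability $1-\varphi_{O,P}$, i.i.d.\ across slots since the fading is i.i.d.) \emph{and} no new update is generated (probability $1-p$); a preemption during a busy slot leaves the PIoT busy and does not alter the per-slot decoding probability, so memorylessness is preserved. Hence the one-step probabilities are $I\!\to\!B$ with $p$ and $B\!\to\!I$ with $(1-\varphi_{O,P})(1-p)$. I would then record the $n$-step transition probability $P^{(n)}_{I\to B}=\pi_B\bigl(1-\mu^{n}\bigr)$ with $\mu=1-p-(1-\varphi_{O,P})(1-p)=(1-p)\varphi_{O,P}$ and $\pi_B=p/\bigl(p+(1-\varphi_{O,P})(1-p)\bigr)=p/(1-\mu)$; equivalently, solving the one-line recursion $a_{n+1}=p+\mu a_n$ with $a_0=0$ gives $a_n=p\,(1-\mu^{n})/(1-\mu)$.

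The second step is the structural observation that the slot at which $K_S$ begins is reached from the previous successful SAP reception after a number of slots that is independent of the PIoT chain, and that at the reception slot the PIoT is necessarily \emph{idle} --- in the overlay scheme a successful SAP reception requires the SD to be transmitting, which requires the PIoT to be idle. Moreover each slot strictly after the reception slot is, independently, the start of $K_S$ with probability $q$, so the number $j$ of slots separating the reception from the start of $K_S$ is geometric on $\{1,2,\dots\}$ with parameter $q$. Conditioning on $j$ and applying the law of total probability yields $\Pr\{B_{K_S}\}=\sum_{j\ge 1}q(1-q)^{j-1}P^{(j)}_{I\to B}=\sum_{j\ge1}q(1-q)^{j-1}a_j$; summing this geometric series (the factor $1-\mu$ cancels) gives $\Pr\{B_{K_S}\}=p/\bigl(1-(1-q)(1-p)\varphi_{O,P}\bigr)$, which after rearranging the denominator as $(1-p)(1-\varphi_{O,P}+q\varphi_{O,P})+p$ is exactly (\ref{BK}), and $\Pr\{I_{K_S}\}=1-\Pr\{B_{K_S}\}$ gives (\ref{IK}).

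I expect the main obstacle to be justifying the two-state Markov reduction cleanly --- specifically, verifying that preemption of the PD's own updates does not spoil the memorylessness of the busy$\to$idle transition, and that the offset between a SAP reception and the start of $K_S$ is genuinely independent of the PIoT's trajectory (so that the conditioning is legitimate) and that the PIoT is indeed idle at every SAP-reception slot. Once these facts are in place, the rest is a routine geometric summation.
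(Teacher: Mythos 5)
Your proof is correct and lands exactly on (\ref{IK})--(\ref{BK}): your denominator $1-(1-q)(1-p)\varphi_{O,P}$ equals $(1-p)\left(1-\varphi_{O,P}+\varphi_{O,P}q\right)+p$ after expansion. The route, however, is genuinely different from the paper's. The paper also starts from the observation that the PIoT must be idle at the slot of a successful SAP reception, but it then decomposes the event $I_{K_S}$ by the number of complete idle--busy cycles ($W_P$ followed by $K_P$) that the PIoT traverses before the SD generates its next update; this requires computing $\Pr\left\{W_S<W_P\right\}$, $\Pr\left\{W_S\ge W_P\right\}$ and $\Pr\left\{W_S\ge K_P\right\}$ from the geometric PMFs, invoking memorylessness to argue that the residual waiting times $W_S^{\prime}$, $W_S^{\prime\prime}$ are again geometric, and summing a geometric series over the number of cycles to get $\Pr\left\{I_{K_S}\right\}=\Pr\left\{W_S<W_P\right\}/\left(1-\Pr\left\{W_S\ge K_P\right\}\Pr\left\{W_S\ge W_P\right\}\right)$. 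You instead collapse the PIoT into a two-state Markov chain, solve its $n$-step transition probability $P^{(n)}_{I\to B}=p\left(1-\mu^{n}\right)/\left(1-\mu\right)$ with $\mu=(1-p)\varphi_{O,P}$, and average over the geometric offset $1+W_S$ to the start of $K_S$. Your version buys a cleaner justification --- the informal residual-lifetime argument is replaced by the Markov property of a chain whose one-step probabilities are immediate from the model (and which is the marginal of the four-state chain the paper later builds for the underlay scheme) --- and it packages the entire PIoT dynamics into one closed-form quantity; the paper's version stays entirely at the level of the intervals $W_P$, $K_P$, $W_S$ already introduced for the AoI analysis and so needs no new machinery. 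The three facts you flag as potential obstacles all hold: per-slot decoding outcomes and generations are i.i.d., so preemption does not alter the busy-to-idle probability $(1-\varphi_{O,P})(1-p)$; the SD's generation process is independent of the PD's channel and generation processes, so conditioning on the offset is legitimate; and a successful SAP reception forces the PD to be idle in that slot by the overlay rule.
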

\begin{proof}
See Appendix \ref{app2}.
\end{proof}
We now derive the two conditional expectations terms $\mathbb E \left[ {K_S \left| I_{K_S} \right.} \right]$ and $\mathbb E \left[ {K_S \left| B_{K_S} \right.} \right]$ in (\ref{overKS}). $\mathbb E \left[ {K_S \left| I_{K_S} \right.} \right]$ can be expressed as
\begin{align}
\mathbb E \left[ {K_S \left| I_{K_S} \right.} \right]&= \underbrace {\left( {1 - {\varphi _{O,S}}} \right)}_{T_1} +  \underbrace {{\varphi _{O,S}}\left( {1 - p} \right)\left( {1 + \mathbb E \left[ {K_S \left| I_{K_S} \right.} \right]} \right)}_{T_2} \nonumber \\
&\quad +  \underbrace {{\varphi _{O,S}}p\left( {1 + \mathbb E \left[ {K_S \left| B_{K_S} \right.} \right]} \right)}_{T_3}. \label{EKSIK}
\end{align}
When the initial slot of $K_S$ is an idle slot, the SIoT can transmit the generated status update immediately. The term $T_1$ in (\ref{EKSIK}) denotes the case that the status update is successfully decoded by the SAP after the initial transmission. The term $T_2$ in (\ref{EKSIK}) denotes the case that the first round of transmission contains an error and the PD does not generate a new status update in the next slot. In this case, the next slot of the SIoT is also an idle slot, and the conditional expectation of $K_S$ for this case can be expressed as $1+\mathbb E \left[ {K_S \left| I_{K_S} \right.} \right]$. The term $T_3$ in (\ref{EKSIK}) shows the case that the first round of transmission contains an error and the PD generates a new status update in the next slot. The next slot becomes a busy slot and the conditional expectation of $K_S$ is given by $1+\mathbb E \left[ {K_S \left| B_{K_S} \right.} \right]$. Similarly, $\mathbb E \left[ {K_S \left| B_{K_S} \right.} \right]$ can be expressed as
\begin{equation}\label{EKSBK}
\begin{split}
\mathbb E \left[ {K_S \left| B_{K_S} \right.} \right]&=\left( {1 - {\varphi _{O,P}}} \right)\left( {1 - p} \right)\left( {1 + \mathbb E \left[ {K_S \left| I_{K_S} \right.} \right]} \right)+ \\
&\quad \left[ {{\varphi _{O,P}}+ \left( {1 - {\varphi _{O,P}}} \right)p} \right]\left( {1 + \mathbb E \left[ {K_S \left| B_{K_S} \right.} \right]} \right).
\end{split}
\end{equation}
From (\ref{EKSIK}) and (\ref{EKSBK}), we can obtain that
\begin{equation}\label{KSIK}
\mathbb E \left[ {K_S \left| I_{K_S} \right.} \right]={{\left( {1 - {\varphi _{O,P}}} \right)\left( {1 - p} \right) + {\varphi _{O,S}}p} \over {\left( {1 - {\varphi _{O,P}}} \right)\left( {1 - {\varphi _{O,S}}} \right)\left( {1 - p} \right)}},
\end{equation}
\begin{equation}\label{KSBK}
\mathbb E \left[ {K_S \left| B_{K_S} \right.} \right]={{\left( {1 - {\varphi _{O,P}}} \right)\left( {1 - p} \right) + 1 - {\varphi _{O,S}} + {\varphi _{O,S}}p} \over {\left( {1 - {\varphi _{O,P}}} \right)\left( {1 - {\varphi _{O,S}}} \right)\left( {1 - p} \right)}}.
\end{equation}
Substituting the results derived in (\ref{IK}), (\ref{BK}), (\ref{KSIK}) and (\ref{KSBK}) into (\ref{overKS}), we can evaluate the expectation of $K_S$ and the term $\mathbb E \left[ {Y_S } \right]$ can be finally obtained in (\ref{YSO}) on top of the next page.
\begin{figure*}[!t]
\begin{equation}\label{YSO}
\begin{split}
\mathbb E \left[ {Y_S } \right]&=\mathbb E \left[ {W_S } \right]+\mathbb E \left[ {K_S } \right]={{1-q} \over {q}} +{{1} \over {1 - {\varphi _{O,S}}}} + {{p - {\varphi _{O,P}}{\varphi _{O,S}}p\left( {1 - p} \right)\left( {1 - q} \right)} \over {\left( {1 - {\varphi _{O,P}}} \right)\left( {1 - {\varphi _{O,S}}} \right)\left( {1 - p} \right)\left[ {\left( {1 - p} \right)\left( {1 - {\varphi _{O,P}} + {\varphi _{O,P}}q} \right) + p} \right]}}.
\end{split}
\end{equation}
\hrulefill
\vspace*{4pt}
\end{figure*}
\subsubsection{The evaluation of $\mathbb{E} \left[{S_S}\right]$}
We now turn to the derivation of the term $\mathbb E\left\{S_S\right\}$ in (\ref{AoIexpression}). $S_S$ also evolves differently when the initial slot of $S_S$ is an idle slot or a busy slot, respectively. We define $I_{S_S}$ and $B_{S_S}$ as the event that the initial slot of $S_S$ is an idle slot, or a busy slot, respectively. The expectation of $S_S$ can thus be evaluated as
\begin{equation}\label{overSS}
\mathbb E \left[ {S_S}\right]=\Pr \left\{ I_{S_S} \right\}\mathbb E \left[ {S_S \left| I_{S_S} \right.} \right]+\Pr \left\{ B_{S_S} \right\}\mathbb E \left[ {S_S \left| B_{S_S} \right.} \right].\\
\end{equation}
It is not easy to characterize the probability terms $\Pr\left\{I_{S_S}\right\}$ and $\Pr\left\{B_{S_S}\right\}$ in (\ref{overSS}). This is because the generated status update may be preempted by other status updates, and the events $I_{S_S}$ and $B_{S_S}$ only consider those status updates that are not discarded and finally decoded correctly by the SAP. In order to derive $\Pr\left\{I_{S_S}\right\}$ and $\Pr\left\{B_{S_S}\right\}$, we further define $I_S$ and $B_S$ as the events that the status update of the SIoT is generated in an idle slot or a busy slot, respectively. To be more exact, the events $I_S$ and $B_S$ are applied to all the generated status updates, including those being decoded correctly and those being preempted and discarded, while the events $I_{S_S}$ and $B_{S_S}$ are only applied to those status updates that are finally decoded correctly by the SAP. Furthermore, we use $\Phi $ to denote the event that the generated status update is not discarded due to preemption and finally decoded correctly at the SAP. With the above definitions, the probability terms $\Pr\left\{I_{S_S}\right\}$ and $\Pr\left\{B_{S_S}\right\}$ can be expressed as
\begin{equation}\label{ISSS}
\Pr \left\{ I_{S_S} \right\}={{\Pr \left\{ {\Phi \left| I_S \right.} \right\}\Pr \left\{ I_S \right\}} \over {\Pr \left\{ {\Phi \left| I_S \right.} \right\}\Pr \left\{ I_S \right\} + \Pr \left\{ {\Phi \left| B_S \right.} \right\}\Pr \left\{ B_S \right\}}},
\end{equation}
\begin{equation}\label{BSSS}
\Pr \left\{ B_{S_S} \right\}={{\Pr \left\{ {\Phi \left| B_S \right.} \right\}\Pr \left\{ B_S \right\}} \over {\Pr \left\{ {\Phi \left| I_S \right.} \right\}\Pr \left\{ I_S \right\} + \Pr \left\{ {\Phi \left| B_S \right.} \right\}\Pr \left\{ B_S \right\}}}.
\end{equation}
We now evaluate the probability terms $\Pr \left\{ I_S \right\}$, $\Pr \left\{ B_S \right\}$, $\Pr \left\{ {\Phi \left| I_S \right.} \right\}$ and $\Pr \left\{ {\Phi \left| B_S \right.} \right\}$ in order to characterize the two probability terms $\Pr \left\{ I_{S_S} \right\}$ and $\Pr \left\{ B_{S_S} \right\}$ in (\ref{overSS}). Recall that the PIoT keeps transmitting during each interval $K_P$ and remains idle during each interval $W_P$. Due to the fact that the generation of new status updates of the SIoT follows an i.i.d. process, the probability that the generation slot is an idle slot or a busy slot can be expressed as
\begin{equation}\label{PI}
\Pr \left\{ {{I_S}} \right\} = {{{W_p}} \over {{W_p} + {K_p}}} = 1 - {p \over {1 - {\varphi _{O,P}}  + p{\varphi _{O,P}} }},
\end{equation}
\begin{equation}\label{PB}
\Pr \left\{ {{B_S}} \right\} = {{{K_p}} \over {{W_p} + {K_p}}} = {p \over {1 - {\varphi _{O,P}}  + p{\varphi _{O,P}} }}.
\end{equation}
By using a similar recursive method as in (\ref{EKSIK}) and (\ref{EKSBK}), $\Pr \left\{ {\Phi \left| I_S \right.} \right\}$ and $\Pr \left\{ {\Phi \left| B_S \right.} \right\}$ can be expressed as
\begin{align}
\Pr \left\{ {\Phi \left| I_S \right.} \right\} & = \left( {1 - {\varphi _{O,S}}} \right) + {\varphi _{O,S}}\left( {1 - p} \right)\left( {1 - q} \right)\Pr \left\{ {\Phi \left| I_S \right.} \right\} \nonumber\\
&\quad +{\varphi _{O,S}}p\left( {1 - q} \right)\Pr \left\{ {\Phi \left| B_S \right.} \right\}, \label{PphiI}
\end{align}
\begin{equation}\label{PphiB}
\begin{split}
\Pr \left\{ {\Phi \left| B_S \right.} \right\} &= \left( {1 - {\varphi _{O,P}}} \right)\left( {1 - p} \right)\left( {1 - q} \right)\Pr \left\{ {\Phi \left| I_S \right.} \right\}+\\
&\quad \left[ {{\varphi _{O,P}} + \left( {1 - {\varphi _{O,P}}} \right)p} \right]\left( {1 - q} \right)\Pr \left\{ {\Phi \left| B_S \right.} \right\}.
\end{split}
\end{equation}
Note that the term $\left( 1-q \right)$ in (\ref{PphiI}) and (\ref{PphiB}) indicates the fact that the SD does not generate a new status update in the next slot and the considered status update is not preempted. Jointly considering (\ref{PphiI}) and (\ref{PphiB}), we can solve the terms $\Pr \left\{ {\Phi \left| I_S \right.} \right\}$ and $\Pr \left\{ {\Phi \left| B_S \right.} \right\}$, and they are given by
\begin{equation}\label{PhiI}
\begin{split}
\Pr \left\{ {\Phi \left| I_S \right.} \right\}= {{{q \over {\left( {1 - p} \right)\left( {1 - q} \right)}} + 1 - {\varphi _{O,P}}} \over {{q \over {\left( {1 - {\varphi _{O,S}}} \right)\left( {1 - p} \right)\left( {1 - q} \right)}} + 1 - {\varphi _{O,P}} - {{q{\varphi _{O,P}}{\varphi _{O,S}}} \over {1 - {\varphi _{O,S}}}}}},
\end{split}
\end{equation}
\begin{equation}\label{PhiB}
\begin{split}
&\Pr \left\{ {\Phi \left| B_S \right.} \right\}={{1 - {\varphi _{O,P}}} \over {{q \over {\left( {1 - {\varphi _{O,S}}} \right)\left( {1 - p} \right)\left( {1 - q} \right)}} + 1 - {\varphi _{O,P}} - {{q{\varphi _{O,P}}{\varphi _{O,S}}} \over {1 - {\varphi _{O,S}}}}}}.
\end{split}
\end{equation}
Substituting the above derived results in (\ref{PI}), (\ref{PB}), (\ref{PhiI}) and (\ref{PhiB}) to (\ref{ISSS}) and (\ref{BSSS}), the probabilities that the initial slot of $S_S$ is an idle slot or a busy slot are given by
\begin{equation}\label{IS}
\begin{split}
\Pr \left\{ I_{S_S} \right\}={{q + \left( {1 - {\varphi _{O,P}}} \right)\left( {1 - p} \right)\left( {1 - q} \right)} \over {q + \left( {1 - {\varphi _{O,P}}} \right)\left( {1 - p} \right)\left( {1 - q} \right) + p - pq}},\\
\end{split}
\end{equation}
\begin{equation}\label{BS}
\begin{split}
\Pr \left\{ B_{S_S} \right\}={{p - pq} \over {q + \left( {1 - {\varphi _{O,P}}} \right)\left( {1 - p} \right)\left( {1 - q} \right) + p - pq}}.\\
\end{split}
\end{equation}

Till now, we have derived the two probability terms $\Pr \left\{ I_{S_S} \right\}$ and $\Pr \left\{ B_{S_S} \right\}$ in (\ref{overSS}). We now derive the two conditional expectation terms in (\ref{overSS}) in order to finally characterize the expectation of $S_S$. Similar to (\ref{PphiI}) and (\ref{PphiB}), the conditional expectation of the service time under the condition that the initial slot of $S_S$ is an idle slot or a busy slot, can be calculated as
\begin{equation}\label{ESSI}
\begin{split}
\mathbb E \left[ {S_S \left| I_{S_S} \right.} \right]&={{1 - {\varphi _{O,S}}} \over {\Pr \left\{ {\Phi \left| I_S \right.} \right\}}} + \\
&\quad {\varphi _{O,S}}\left( {1 - p} \right)\left( {1 - q} \right)\left( {1 + \mathbb E \left[ {S_S \left| I_{S_S} \right.} \right]} \right) +\\
&\quad {{{\varphi _{O,S}}p\left( {1 - q} \right)\Pr \left\{ {\Phi \left| B_{S_S} \right.} \right\}} \over {\Pr \left\{ {\Phi \left| I_S \right.} \right\}}}\left( {1 + \mathbb E \left[ {S_S \left| B_{S_S} \right.} \right]} \right), \\
\end{split}
\end{equation}
\begin{equation}\label{ESSB}
\begin{split}
&\mathbb E \left[ {S_S \left| B_{S_S} \right.} \right]\\
&={{\left( {1 - {\varphi _{O,P}}} \right)\left( {1 - p} \right)\left( {1 - q} \right)\Pr \left\{ {\Phi \left| I_S \right.} \right\}} \over {\Pr \left\{ {\Phi \left| B_S \right.} \right\}}}\left( {1 + \mathbb E \left[ {S_S \left| I_{S_S} \right.} \right]} \right)\\
&\quad+\left( {{\varphi _{O,P}} + p - p{\varphi _{O,P}}} \right)\left( {1 - q} \right)\left( {1 + \mathbb E \left[ {S_S \left| B_{S_S} \right.} \right]} \right).
\end{split}
\end{equation}
In (\ref{ESSI}) and (\ref{ESSB}), each case in (\ref{PphiI}) and (\ref{PphiB}) are further conditioned on the events that the generated status update in an idle slot or in a busy slot is not preempted and finally decoded correctly by the SAP. The probability of these two events are derived in (\ref{PhiI}) and (\ref{PhiB}), respectively. The terms for each case in (\ref{PphiI}) and (\ref{PphiB}) thus should divide the total probabilities of these two events, i.e., $\Pr \left\{ {\Phi \left| I_S \right.} \right\}$ and $\Pr \left\{ {\Phi \left| B_S \right.} \right\}$, in deriving  (\ref{ESSI}) and (\ref{ESSB}), respectively. We can now solve the terms $\mathbb E \left[ {S_S \left| I_{S_S} \right.} \right]$ and $\mathbb E \left[ {S_S \left| B_{S_S} \right.} \right]$ from (\ref{ESSI}) and (\ref{ESSB}), and they are given by
\begin{equation}\label{SSI}
\mathbb E \left[ {S_S \left| I_{S_S} \right.} \right]={{1 + {{{\varphi _{O,S}}p\left( {1 - q} \right)\Pr \left\{ {\Phi \left| B \right.} \right\}} \over {\left[ {q + \left( {1 - {\varphi _{O,P}}} \right)\left( {1 - p} \right)\left( {1 - q} \right)} \right]\Pr \left\{ {\Phi \left| I \right.} \right\}}}} \over {1 - {\varphi _{O,S}}\left( {1 - q} \right)\left( {1 - p + p{{\Pr \left\{ {\Phi \left| B \right.} \right\}} \over {\Pr \left\{ {\Phi \left| I \right.} \right\}}}} \right)}},
\end{equation}
\begin{align}
\mathbb E \left[ {S_S \left| B_{S_S} \right.} \right]&={{1 + {{{\varphi _{O,S}}p\left( {1 - q} \right)\Pr \left\{ {\Phi \left| B \right.} \right\}} \over {\left[ {q + \left( {1 - {\varphi _{O,P}}} \right)\left( {1 - p} \right)\left( {1 - q} \right)} \right]\Pr \left\{ {\Phi \left| I \right.} \right\}}}} \over {1 - {\varphi _{O,S}}\left( {1 - q} \right)\left( {1 - p + p{{\Pr \left\{ {\Phi \left| B \right.} \right\}} \over {\Pr \left\{ {\Phi \left| I \right.} \right\}}}} \right)}} \nonumber\\
&\quad+{1 \over {q + \left( {1 - {\varphi _{O,P}}} \right)\left( {1 - p} \right)\left( {1 - q} \right)}}.\label{SSB}
\end{align}
With the results derived in (\ref{PhiI}), (\ref{PhiB}), (\ref{IS}), (\ref{BS}), (\ref{SSI}) and (\ref{SSB}), the expectation of the service time for the SIoT using the overlay scheme is finally obtained as
\begin{equation}\label{SS}
\begin{split}
&\mathbb E \left[ {S_S}\right]\\
&=\Pr \left\{ I_{S_S} \right\}\mathbb E \left[ {S_S \left| I_{S_S} \right.} \right]+\Pr \left\{ B_{S_S} \right\}\mathbb E \left[ {S_S \left| B_{S_S} \right.} \right]\\
&={{{q \over {1 - q}} + \left( {1 - {\varphi _{O,P}}} \right)\left( {1 - p} \right) + {{p\left[ {1 -  {\varphi _{O,P}}{\varphi _{O,S}}\left( {1 - p} \right)\left( {1 - q} \right)} \right]} \over {1 - {\varphi _{O,P}}\left( {1 - p} \right)\left( {1 - q} \right)}}} \over {{q \over {1 - q}} + \left( {1 - p} \right)\left[ {1 - {\varphi _{O,P}} - {\varphi _{O,S}}\left( {1 - {\varphi _{O,P}} + {\varphi _{O,P}}q} \right)} \right]}}.
\end{split}
\end{equation}
The average peak AoI of the SIoT using the overlay scheme can now be characterized in (\ref{AoIexpression}), where $\mathbb E \left[ {Y_S } \right]$ and $\mathbb E \left[ {S_S } \right]$ have been derived in (\ref{YSO}) and (\ref{SS}).

\section{AoI Analysis of the Underlay Scheme}
In this section, we analyze the average peak AoI of the PIoT and the SIoT for the underlay scheme. According to (\ref{AoIexpression}), we also derive the terms $\mathbb{E} \left[{S_Z}\right]$ and $\mathbb{E} \left[{Y_Z}\right]$ in order to characterize the average peak AoI. In the underlay scheme, the PIoT and the SIoT affect each other's performance due to the mutual interference. Because of the complicated behavior and the interdependence between the PIoT and the SIoT, the performance analysis methods used in Section III for the overlay scheme cannot be applied directly to the underlay scheme. We characterize the average peak AoI for the PIoT and the SIoT by applying a discrete Markov Chain (MC) to model the dynamic behaviors of the considered CR-IoT.
\subsection{The MC of the Underlay CR-IoT}
To proceed, we define the following four states of the MC. Specifically, state $s_1$ denotes that both the PIoT and the SIoT are idle, and they are waiting for new status updates to be generated. State $s_2$ denotes the case that the PIoT is busy while the SIoT is idle. In $s_2$, only the PIoT is transmitting and the SIoT is waiting for new status updates to be generated. State $s_3$ denotes the case that the SIoT is busy while the PIoT is idle. State $s_4$ denotes the case that both the PIoT and the SIoT are busy. The transition probability $T_{i,j}, \forall i,j \in \left\{1,2,3,4\right\}$ is defined as the probability of transition from state $s_i$ to state $s_j$. Based on the system description given in Section II, we can calculate the $4 \times 4$ transition matrix for the considered MC. In the following, for brevity, we only give the detailed analysis for the transition cases that start with $s_3$, which is complicated, and the transition cases starting with other states can be calculated similarly.
\subsubsection{Transition from $s_3$ to $s_1$} In this case, the PIoT remains idle during the transition, which means that no new status update is generated at the PIoT at the beginning of the next slot. The SIoT transits from busy to idle, it happens only when the status update of the SIoT is decoded correctly in the current slot and no new status update is generated at the SIoT at the beginning of the next slot. With the above analysis, the transition probability $T_{3,1}$ is given by
\begin{equation}
T_{3,1}={\left( {1 - p} \right)\left( {1 - {\varphi_{U,S}}} \right)\left( {1 - q} \right)}.
\end{equation}
\subsubsection{Transition from $s_3$ to $s_2$} In this transition case, new status update is generated at the PIoT such that it changes from idle to busy. The SIoT transits from busy to idle with a probability $\left( {1 - {\varphi_{U,S}}} \right)\left( {1 - q} \right)$ as analyzed before, and $T_{3,2}$ is given by
\begin{equation}
T_{3,2}={p\left( {1 - {\varphi_{U,S}}} \right)\left( {1 - q} \right)}.
\end{equation}
\subsubsection{Transition from $s_3$ to $s_3$} First of all, no status update is generated at the PIoT to keep the PIoT idle. Besides, two events may happen at the SIoT such that the SIoT can remain busy. The first event is that the status update cannot be decoded correctly in the current slot. The second event is that the status update is decoded successfully in the current slot but a new status update is generated at the beginning of the next slot. With the above analysis, the term $T_{3,3}$ can be evaluated as
\begin{align}
T_{3,3}&=\left( {1 - p} \right)\left[ {{\varphi _{U,S}}+\left( {1 - {\varphi _{U,S}}} \right)q} \right] \nonumber\\
&= \left( {1 - p} \right)\left( {q + {\varphi _{U,S}} - q{\varphi _{U,S}}} \right).
\end{align}
\subsubsection{Transition from $s_3$ to $s_4$} In this transition case, the PIoT changes from idle to busy while the SIoT remains busy, the transition probability can be derived as
\begin{equation}
T_{3,4}={p\left( {q + {\varphi_{U,S}} - q{\varphi_{U,S}}} \right)}.
\end{equation}
Till now, we have characterized the transition probabilities when the system starts with $s_3$. After deriving all the other transition probabilities, the $4\times 4$ transition matrix of the MC model is obtained in (\ref{MCtransition}) on top of the next page, where $\alpha=\left(1-{{\mathord{\buildrel{\lower3pt\hbox{$\scriptscriptstyle\frown$}}\over
 {\varphi}  } }_{U,P}}\right)\left(1-p\right)$, and $\beta=\left(1-{{{{\mathord{\buildrel{\lower3pt\hbox{$\scriptscriptstyle\frown$}}\over
 \varphi } }}}_{U,S}}\right)\left(1-q\right)$.
\begin{figure*}[!t]
\begin{equation}\label{MCtransition}
\pmb {M}=
\begin{bmatrix}
   &{\left( {1 - p} \right)\left( {1 - q} \right)} & {p\left( {1 - q} \right)} & {\left( {1 - p} \right)q} & {pq}  \\
   &{\left( {1 - {\varphi_{U,P}}} \right)\left( {1 - p} \right)\left( {1 - q} \right)} & {\left( {p + {\varphi_{U,P}} - p{\varphi_{U,P}}} \right)\left( {1 - q} \right)} & {\left( {1 - {\varphi_{U,P}}} \right)\left( {1 - p} \right)q} & {\left( {p +{\varphi_{U,P}} - p{\varphi_{U,P}}} \right)q}  \\
   &{\left( {1 - p} \right)\left( {1 - {\varphi_{U,S}}} \right)\left( {1 - q} \right)} & {p\left( {1 - {\varphi_{U,S}}} \right)\left( {1 - q} \right)} & {\left( {1 - p} \right)\left( {q + {\varphi_{U,S}} - q{\varphi_{U,S}}} \right)} & {p\left( {q + {\varphi_{U,S}} - q{\varphi_{U,S}}} \right)}  \\
   &{\alpha \beta} & {\left(1-\alpha\right)\beta} & {\alpha\left(1-\beta\right)} & {\left(1-\alpha\right)\left(1-\beta\right)}  \\
\end{bmatrix}
\end{equation}
\hrulefill
\end{figure*}

In order to analyze the average peak AoI of the PIoT and the SIoT using the underlay scheme, we are interested in the stationary distribution $\pmb \pi =\left\{\pi_1, \pi_2, \pi_3, \pi_4 \right\}$ of the MC, where $\pi_i$ denotes the stationary probability that the CR-IoT is at state $s_i$. To obtain the stationary distribution of the MC, we first prove that the considered MC is irreducible and row stochastic in the following lemma.
\begin{lemma}
The transition matrix $\pmb M$ given in (\ref{MCtransition}) of the MC model is irreducible and row stochastic for the generation rates $0<p,q<1$ and the outage probabilities ${\varphi_{O,P}}, {\varphi_{O,S}},{\varphi_{U,P}}, {\varphi_{U,S}}, {{\mathord{\buildrel{\lower3pt\hbox{$\scriptscriptstyle\frown$}}\over
 {\varphi}  } }_{U,P}}, {{{{\mathord{\buildrel{\lower3pt\hbox{$\scriptscriptstyle\frown$}}\over
 \varphi } }}}_{U,S}} \ne 1$.
\end{lemma}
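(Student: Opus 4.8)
The plan is to verify the two properties of $\pmb{M}$ separately. \emph{Row stochasticity} amounts to showing all sixteen entries are nonnegative and that each of the four rows sums to one; \emph{irreducibility} amounts to showing that the directed graph on $\{s_1,s_2,s_3,s_4\}$ induced by the strictly positive entries of $\pmb{M}$ is strongly connected.

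For row stochasticity I would first argue that every entry of $\pmb{M}$ is a product of quantities lying in $[0,1]$: the factors $p$, $1-p$, $q$, $1-q$ lie in $(0,1)$ by the hypothesis $0<p,q<1$; the outage probabilities are probabilities of outage events and hence lie in $[0,1]$; and the composite quantities in rows~2--4 can be rewritten as $p+{\varphi_{U,P}}-p{\varphi_{U,P}}=1-(1-p)(1-{\varphi_{U,P}})$ and $q+{\varphi_{U,S}}-q{\varphi_{U,S}}=1-(1-q)(1-{\varphi_{U,S}})$, while the row-4 quantities $\alpha$, $\beta$ are products of such factors, so that $\alpha,\beta,1-\alpha,1-\beta\in[0,1]$ as well. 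Hence all entries are nonnegative. Each row then sums to one by a direct computation: row~1 gives $(1-p)(1-q)+p(1-q)+(1-p)q+pq=(1-q)+q=1$; row~2 gives $(1-q)\bigl[(1-{\varphi_{U,P}})(1-p)+(p+{\varphi_{U,P}}-p{\varphi_{U,P}})\bigr]+q\bigl[\cdots\bigr]=(1-q)+q=1$ after using $(1-{\varphi_{U,P}})(1-p)+(p+{\varphi_{U,P}}-p{\varphi_{U,P}})=1$; and rows~3 and~4 follow by the same type of computation, which establishes that $\pmb{M}$ is row stochastic.

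For irreducibility, rather than inspecting all sixteen reachability pairs I would show (i) every state is reachable from $s_1$ in one step and (ii) $s_1$ is reachable from every state in one step; strong connectivity then follows immediately, since any transition $s_i\to s_j$ is realized by the two-step route $s_i\to s_1\to s_j$ (shortening to one step when $i=1$ or $j=1$). Claim~(i) holds because $T_{1,1}=(1-p)(1-q)$, $T_{1,2}=p(1-q)$, $T_{1,3}=(1-p)q$, $T_{1,4}=pq$ are all strictly positive under $0<p,q<1$. Claim~(ii) holds because $T_{2,1}=(1-{\varphi_{U,P}})(1-p)(1-q)$, $T_{3,1}=(1-p)(1-{\varphi_{U,S}})(1-q)$, and $T_{4,1}=\alpha\beta$ are strictly positive: every factor $1-p$, $1-q$, $1-{\varphi_{U,P}}$, $1-{\varphi_{U,S}}$, $\alpha$, $\beta$ is nonzero precisely because $0<p,q<1$ and because the stated conditions exclude the underlay outage probabilities from being equal to one (recall that $\alpha$ and $\beta$ defined below (\ref{MCtransition}) are themselves products of a $1-p$ or $1-q$ term with one such no-unit-outage factor). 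Combining (i) and (ii) shows that $\pmb{M}$ is irreducible.

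The proof is essentially a bookkeeping exercise, and the only place warranting a little care is confirming that the composite entries of rows~2--4 both lie in $[0,1]$ and sum correctly to one, which is why I would first put them in the $1-(1-\cdot)(1-\cdot)$ form before summing, together with keeping track of exactly which hypothesis ($0<p,q<1$ versus the no-unit-outage conditions) is responsible for the strict positivity of each factor used in establishing reachability into $s_1$.
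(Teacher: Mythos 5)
Your proof is correct and follows essentially the same route as the paper: direct verification that each row sums to one, plus positivity of enough entries to force strong connectivity. The only (harmless) difference is that the paper observes that \emph{all} sixteen entries $T_{i,j}$ are nonzero under the stated hypotheses, whereas you more economically check only the first row and first column and route every transition through $s_1$; both verifications go through.
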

\begin{proof}
We can verify that $\sum\limits_{j = 1}^4 {{T_{i,j}}}  = 1, \forall i$, which shows that the MC is row stochastic. Besides, since ${T_{i,j}} \ne 0, \forall i,j$ for $0<p,q<1$ and ${\varphi_{O,P}}, {\varphi_{O,S}},{\varphi_{U,P}}, {\varphi_{U,S}}, {{\mathord{\buildrel{\lower3pt\hbox{$\scriptscriptstyle\frown$}}\over
 {\varphi}  } }_{U,P}}, {{{{\mathord{\buildrel{\lower3pt\hbox{$\scriptscriptstyle\frown$}}\over
 \varphi } }}}_{U,S}} \ne 1$, all the states of the MC are connected to each other. The transition matrix is thus irreducible.
\end{proof}
Given that the transition matrix $\pmb M$ is irreducible and row stochastic, the stationary distribution of the MC can be derived by solving $\pmb \pi \pmb {M}= \pmb \pi $ and it is given by \cite{YIFANdistributedrelay}
\begin{equation}\label{MCstationary}
 \pmb {\pi} = {\left( {{\mathbf{M}}^{T} - \mathbf{I} + \mathbf{B} }\right)^{ - 1}}\mathbf{b},
\end{equation}
where ${\mathbf{B}_{i,j}}=1, \forall i, j$ and $\mathbf{b}={(1,1, \cdots ,1)^T}$. Based on the above derived stationary distribution of the MC, we are now ready to analyze the average peak AoI for the PIoT and the SIoT using the underlay scheme in the following subsection.
\subsection{Average Peak AoI Analysis}
In the underlay scheme, because the PIoT and the SIoT perform similarly, we derive the general terms $\mathbb E \left[{Y_{Z}}\right]$ and $\mathbb E \left[{S_{Z}}\right]$ in (\ref{AoIexpression}) in order to evaluate the average peak AoI for the PIoT and the SIoT, where $Z \in \left\{P, S \right\}$ represents the PIoT and the SIoT, respectively.
\subsubsection{The evaluation of $\mathbb{E} \left[{Y_Z}\right]$}
Due to the fact that $Y_Z = W_Z+K_Z$, the expectation of $Y_Z$ can be obtained by evaluating the expectations of $W_Z$ and $K_Z$. We first consider the terms for the PIoT, i.e., $W_P$ and $K_P$. Specifically, the PIoT is idle in the states $s_1$ and $s_3$, while busy in the states $s_2$ and $s_4$. From the definition of $W_P$ and $K_P$ given in Section II, we notice that the PIoT is idle during the interval $W_P$ and busy during the interval $K_P$. We can thus deduce that
\begin{equation}\label{equation1}
{\pi _1} + {\pi _3} = {{\mathbb{E}\left[W_P\right]} \over {\mathbb{E}\left[W_P\right]+\mathbb{E}\left[K_P\right]}},
\end{equation}
where $\pi_i$ is the stationary distribution of state $s_i$ of the MC. With the fact that ${\mathbb{E}\left[W_P\right]}={{1 - p} \over p}$, we can solve $\mathbb{E}\left[K_P\right]$ from (\ref{equation1}) and it is given by
\begin{equation}\label{KPU}
\mathbb{E}\left[K_P\right]={{\left( {1 - p} \right)\left( {{\pi _2} + {\pi _4}} \right)} \over {p\left( {{\pi _1} + {\pi _3}} \right)}}.
\end{equation}
The expectation of $Y_P$ can then be calculated as
\begin{equation}\label{YPU}
\mathbb{E}\left[Y_P\right]={\mathbb{E}\left[W_P\right]}+\mathbb{E}\left[K_P\right]={{ {1 - p} } \over {p\left( {{\pi _1} + {\pi _3}} \right)}}.
\end{equation}
Similarly, for the SIoT, the expectation of $Y_S$ can be derived as
\begin{equation}\label{YSU}
\mathbb{E}\left[Y_S\right]={{ {1 - q} } \over {q\left( {{\pi _1} + {\pi _2}} \right)}}.
\end{equation}

\subsubsection{The evaluation of $\mathbb{E} \left[{S_Z}\right]$}
We now turn to the derivation of ${\mathbb{E}\left[S_Z\right]}$ for the PIoT and the SIoT using the underlay scheme. We use $I_{S_Z}$ and $B_{S_Z}$ to denote the event that the initial slot of $S_Z$ is an idle slot or a busy slot, respectively, and the expectation of $S_Z$ is given by
\begin{equation}\label{SZ}
\mathbb E \left\{ {{S_Z}} \right\}=\Pr \left\{ I_{S_Z} \right\}\mathbb E \left\{ {S_Z \left| I_{S_Z} \right.} \right\}+\Pr \left\{ B_{S_Z} \right\}\mathbb E \left\{ {S_Z \left| B_{S_Z} \right.} \right\}.
\end{equation}
To characterize the two probability terms in (\ref{SZ}), i.e., $\Pr \left\{ I_{S_Z} \right\}$ and $\Pr \left\{ B_{S_Z} \right\}$, we further define $I_Z$ and $B_Z$ as the events that the status updates are generated in an idle slot or a busy slot, respectively. $\Phi $ represents the event that the generated status update is not discarded due to preemption and finally decoded correctly. Similar to (\ref{ISSS}) and (\ref{BSSS}), the probability that the initial slot of $S_Z$ is an idle slot or a busy slot can be expressed~as
\begin{equation}\label{ISZ}
\Pr \left\{ I_{S_Z} \right\} = {{\Pr \left\{ {{I_Z}} \right\}\Pr \left\{ {\Phi \left| {{I_Z}} \right.} \right\}} \over {\Pr \left\{ {{I_Z}} \right\}\Pr \left\{ {\Phi \left| {{I_Z}} \right.} \right\} + \Pr \left\{ {{B_Z}} \right\}\Pr \left\{ {\Phi \left| {{B_Z}} \right.} \right\}}},
\end{equation}
\begin{equation}\label{BSZ}
\Pr \left\{ B_{S_Z} \right\} = {{\Pr \left\{ {{B_Z}} \right\}\Pr \left\{ {\Phi \left| {{B_Z}} \right.} \right\}} \over {\Pr \left\{ {{I_Z}} \right\}\Pr \left\{ {\Phi \left| {{I_Z}} \right.} \right\} + \Pr \left\{ {{B_Z}} \right\}\Pr \left\{ {\Phi \left| {{B_Z}} \right.} \right\}}}.
\end{equation}
We now evaluate the probability terms $\Pr \left\{ {{I_Z}} \right\}$, $\Pr \left\{ {{B_Z}} \right\}$, $\Pr \left\{ {\Phi \left| {{I_Z}} \right.} \right\}$ and $\Pr \left\{ {\Phi \left| {{B_Z}} \right.} \right\}$ in order to derive the two probability terms $\Pr \left\{ I_{S_Z} \right\}$ and $\Pr \left\{ B_{S_Z} \right\}$ in (\ref{SZ}).
From the definition of the MC, the PIoT is idle during the states $s_1$ and $s_3$, while busy during the states $s_2$ and $s_4$. Besides, the SIoT is idle during the states $s_1$ and $s_2$, while busy during the states $s_3$ and $s_4$. The probabilities that a new status update of the PIoT (SIoT) is generated in an idle slot or a busy slot are thus given by
\begin{equation}\label{IPU}
\Pr \left\{ {{I_Z}} \right\} = \left\{ {\begin{matrix}
   {{\pi _1} + {\pi _2},\quad \text{if}\quad Z = P}  \\
   {{\pi _1} + {\pi _3},\quad \text{if}\quad Z = S}  \\
\end{matrix} } \right.,
\end{equation}
\begin{equation}\label{BPU}
\Pr \left\{ {{B_Z}} \right\} = \left\{ {\begin{matrix}
   {{\pi _3} + {\pi _4},\quad \text{if}\quad Z = P}  \\
   {{\pi _2} + {\pi _4},\quad \text{if}\quad Z = S}  \\
\end{matrix} } \right..
\end{equation}
We then use a similar recursive method as in Section III-B to calculate the other two probability terms, $\Pr \left\{ {\Phi \left| {{I_Z}} \right.} \right\}$ and $\Pr \left\{ {\Phi \left| {{B_Z}} \right.} \right\}$. For notation simplicity, we use $u_Z$ to denote the generation rate of $Z$ and $v_Z$ to denote the generation rate of the other IoT system, i.e., $u_P=p$, $v_P=q$ and $u_S=q$, $v_S=p$. We use $\xi_{Z}$ to denote the outage probability of $Z$ without interference and $\zeta_{Z}$ denote the outage probability of the other IoT system without interference, i.e., $\xi_{P}={\varphi _{U,P}}$ and $\zeta_{P}={\varphi _{U,S}}$. We let ${{\mathord{\buildrel{\lower3pt\hbox{$\scriptscriptstyle\frown$}}\over
 {\xi}  } }_{U,Z}}$ denote the outage probability of $Z$ with interference and ${{\mathord{\buildrel{\lower3pt\hbox{$\scriptscriptstyle\frown$}}\over
 {\zeta}  } }_{U,Z}}$ denote the outage probability of the other IoT system with interference, i.e., ${{{\mathord{\buildrel{\lower3pt\hbox{$\scriptscriptstyle\frown$}}\over
 {\xi}  } }_{U,P}}}={{\mathord{\buildrel{\lower3pt\hbox{$\scriptscriptstyle\frown$}}\over
 {\varphi}  } }_{U,P}}$ and ${{\mathord{\buildrel{\lower3pt\hbox{$\scriptscriptstyle\frown$}}\over
 {\zeta}  } }_{U,P}}={{\mathord{\buildrel{\lower3pt\hbox{$\scriptscriptstyle\frown$}}\over
 {\varphi}  } }_{U,S}}$. Similar to (\ref{PphiI}) and (\ref{PphiB}), the probability that the generated status update is finally decoded correctly when the generation slot is an idle slot or a busy slot can be expressed~as
\begin{align}
\Pr \left\{ {\Phi \left| I_Z \right.} \right\} & = \left( {1 - {\xi_{Z}}} \right) + {\xi_{U,Z}}\left( {1 - u_Z} \right)v_Z\Pr \left\{ {\Phi \left| B_Z \right.} \right\} \nonumber\\
&\quad +{\xi_{Z}}\left( {1 - u_Z} \right)\left( {1 - v_Z} \right)\Pr \left\{ {\Phi \left| I_Z \right.} \right\},\label{PphiI1}\\ \nonumber
\end{align}
\begin{equation}\label{PphiB1}
\begin{split}
\Pr \left\{ {\Phi \left| B_Z \right.} \right\} & = \left( {1 - {{\mathord{\buildrel{\lower3pt\hbox{$\scriptscriptstyle\frown$}}\over
 {\xi}  } }_{Z}}} \right) + \\
 &\quad {{\mathord{\buildrel{\lower3pt\hbox{$\scriptscriptstyle\frown$}}\over
 {\xi}  } }_{Z}}\left( {1 - u_Z} \right)\left[{{\mathord{\buildrel{\lower3pt\hbox{$\scriptscriptstyle\frown$}}\over
 {\zeta}  } }_Z}+\left(1-{{\mathord{\buildrel{\lower3pt\hbox{$\scriptscriptstyle\frown$}}\over
 {\zeta}  } }_Z}\right)v_Z\right]\Pr \left\{ {\Phi \left| B_Z \right.} \right\} \\
 &\quad +{{\mathord{\buildrel{\lower3pt\hbox{$\scriptscriptstyle\frown$}}\over
 {\xi}  } }_Z}\left(1-{{\mathord{\buildrel{\lower3pt\hbox{$\scriptscriptstyle\frown$}}\over
 {\zeta}  } }_Z}\right)\left( {1 - u_Z} \right)\left( {1 - v_Z} \right)\Pr \left\{ {\Phi \left| I_Z \right.} \right\}.
\end{split}
\end{equation}
We can solve $\Pr \left\{ {\Phi \left| I_Z \right.} \right\}$ and $\Pr \left\{ {\Phi \left| B_Z \right.} \right\}$ from (\ref{PphiI1}) and (\ref{PphiB1}), and they are given in (\ref{PhiIU}) and (\ref{PhiBU}) on top of the next page. With the results derived in (\ref{IPU}), (\ref{BPU}), (\ref{PhiIU}) and (\ref{PhiBU}), we have characterized the two probability terms $\Pr \left\{ I_{S_Z} \right\}$ and $\Pr \left\{ B_{S_Z} \right\}$ in (\ref{ISZ}) and (\ref{BSZ}), respectively. We next turn to the derivation of the two conditional expectation terms $\mathbb E \left\{ {S_Z \left| I_{S_Z} \right.} \right\}$ and $\mathbb E \left\{ {S_Z \left| B_{S_Z} \right.} \right\}$ in (\ref{SZ}) in order to finally obtain the expectation of service time of the PIoT (SIoT) for the underlay scheme.
\begin{figure*}[!t]
\begin{equation}\label{PhiIU}
\Pr \left\{ {\Phi \left| I_Z \right.} \right\}={{\left( {1 - {\xi _Z}} \right)\left[ {1 - {{\mathord{\buildrel{\lower3pt\hbox{$\scriptscriptstyle\frown$}}\over
 {\xi}  } }_Z}\left( {1 - u_Z} \right)\left( {v_Z + {{\mathord{\buildrel{\lower3pt\hbox{$\scriptscriptstyle\frown$}}\over
 {\zeta}  } }_Z} - v_Z{{\mathord{\buildrel{\lower3pt\hbox{$\scriptscriptstyle\frown$}}\over
 {\zeta}  } }_Z}} \right)} \right] + {\xi _Z}\left( {1 - {{\mathord{\buildrel{\lower3pt\hbox{$\scriptscriptstyle\frown$}}\over
 {\xi}  } }_Z}} \right)\left( {1 - u_Z} \right)v_Z} \over {\left[ {1 - {\xi _Z}\left( {1 - u_Z} \right)\left( {1 - v_Z} \right)} \right]\left[ {1 - {{\mathord{\buildrel{\lower3pt\hbox{$\scriptscriptstyle\frown$}}\over
 {\xi}  } }_Z}{{\mathord{\buildrel{\lower3pt\hbox{$\scriptscriptstyle\frown$}}\over
 {\zeta}  } }_Z}\left( {1 - u_Z} \right)} \right] - {{\mathord{\buildrel{\lower3pt\hbox{$\scriptscriptstyle\frown$}}\over
 {\xi}  } }_Z}\left( {1 - {{\mathord{\buildrel{\lower3pt\hbox{$\scriptscriptstyle\frown$}}\over
 {\zeta}  } }_Z}} \right)\left( {1 - u_Z} \right)v_Z}},
\end{equation}
\begin{equation}\label{PhiBU}
\Pr \left\{ {\Phi \left| B_Z \right.} \right\}={{\left( {1 - {{\mathord{\buildrel{\lower3pt\hbox{$\scriptscriptstyle\frown$}}\over
 {\xi}  } }_Z}} \right)\left[ {1 - {\xi _Z}\left( {1 - u_Z} \right)\left( {1 - v_Z} \right)} \right] + \left( {1 - {\xi _Z}} \right){{\mathord{\buildrel{\lower3pt\hbox{$\scriptscriptstyle\frown$}}\over
 {\xi}  } }_Z}\left( {1 - {{\mathord{\buildrel{\lower3pt\hbox{$\scriptscriptstyle\frown$}}\over
 {\zeta}  } }_Z}} \right)\left( {1 - u_Z} \right)\left( {1 - v_Z} \right)} \over {\left[ {1 - {\xi _Z}\left( {1 - u_Z} \right)\left( {1 - v_Z} \right)} \right]\left[ {1 - {{\mathord{\buildrel{\lower3pt\hbox{$\scriptscriptstyle\frown$}}\over
 {\xi}  } }_Z}{{\mathord{\buildrel{\lower3pt\hbox{$\scriptscriptstyle\frown$}}\over
 {\zeta}  } }_Z}\left( {1 - u_Z} \right)} \right] - {{\mathord{\buildrel{\lower3pt\hbox{$\scriptscriptstyle\frown$}}\over
 {\xi}  } }_Z}\left( {1 - {{\mathord{\buildrel{\lower3pt\hbox{$\scriptscriptstyle\frown$}}\over
 {\zeta}  } }_Z}} \right)\left( {1 - u_Z} \right)v_Z}}.
\end{equation}
\hrulefill
\vspace*{4pt}
\end{figure*}

According to (\ref{ESSI}) and (\ref{ESSB}), the conditional expectation of the service time $S_Z$, under the condition that its initial slot is an idle or a busy slot, can be expressed as
\begin{equation}\label{SPIPU}
\begin{split}
\mathbb E \left\{ {{S_Z}\left| {{I_{S_Z}}} \right.} \right\} &= {{\left( {1 - {\xi _Z}} \right)} \over {\Pr \left\{ {\Phi \left| {{I_Z}} \right.} \right\}}} + \\
&\quad {\xi _Z}\left( {1 - u_Z} \right)v_Z{{\Pr \left\{ {\Phi \left| {{B_Z}} \right.} \right\}} \over {\Pr \left\{ {\Phi \left| {{I_Z}} \right.} \right\}}}\left( {1 + \mathbb E \left\{ {{S_Z}\left| {{B_{S_Z}}} \right.} \right\}} \right) \\
&\quad + {\xi _Z}\left( {1 - u_Z} \right)\left( {1 - v_Z} \right)\left( {1 + \mathbb E \left\{ {{S_Z}\left| {{I_{S_Z}}} \right.} \right\}} \right),
\end{split}
\end{equation}
\begin{align}
&\mathbb E \left\{ {S_Z \left| B_{S_Z} \right.} \right\} \nonumber \\
&= {{\left( {1 - {{\mathord{\buildrel{\lower3pt\hbox{$\scriptscriptstyle\frown$}}\over
 {\xi}  } }_Z}} \right)} \over {\Pr \left\{ {\Phi \left| B_Z \right.} \right\}}} + {{\mathord{\buildrel{\lower3pt\hbox{$\scriptscriptstyle\frown$}}\over
 {\xi}  } }_Z}\left( {1 - u_Z} \right) \left[{{\mathord{\buildrel{\lower3pt\hbox{$\scriptscriptstyle\frown$}}\over
 {\zeta}  } }_Z}+\left(1-{{\mathord{\buildrel{\lower3pt\hbox{$\scriptscriptstyle\frown$}}\over
 {\zeta}  } }_Z}\right)v_Z\right] \nonumber\\
 &\quad \times \left(1+ \mathbb E \left\{ {S_Z \left| B_{S_Z} \right.} \right\} \right)+ {{\mathord{\buildrel{\lower3pt\hbox{$\scriptscriptstyle\frown$}}\over
 {\xi}  } }_Z}\left(1-{{\mathord{\buildrel{\lower3pt\hbox{$\scriptscriptstyle\frown$}}\over
 {\zeta}  } }_Z}\right)\left( {1 - u_Z} \right)\left( {1 - v_Z} \right) \nonumber\\
 & \quad \times {{\Pr \left\{ {\Phi \left| I_Z \right.} \right\}} \over {\Pr \left\{ {\Phi \left| B_Z \right.} \right\}}}\left(1+ \mathbb E \left\{ {S_Z \left| I_{S_Z} \right.} \right\} \right).\label{SPBPU}
\end{align}
From (\ref{SPIPU}) and (\ref{SPBPU}), we can solve $\mathbb E \left\{ {S_Z \left| I_{S_Z} \right.} \right\}$ and $\mathbb E \left\{ {S_Z \left| B_{S_Z} \right.} \right\}$ given in (\ref{SZIU}) and (\ref{SZBU}) on top of the next page. With the above derived terms, we have characterized the expectation of service time for the PIoT (SIoT).
\begin{figure*}[!t]
\begin{equation}\label{SZIU}
\mathbb E \left\{ {S_Z \left| I_{S_Z} \right.} \right\}={{1 - {{\mathord{\buildrel{\lower3pt\hbox{$\scriptscriptstyle\frown$}}\over
 {\xi}  } }_Z}\left( {1 - u_Z} \right)\left( {{{\mathord{\buildrel{\lower3pt\hbox{$\scriptscriptstyle\frown$}}\over
 {\zeta}  } }_Z} + v_Z - v_Z{{\mathord{\buildrel{\lower3pt\hbox{$\scriptscriptstyle\frown$}}\over
 {\zeta}  } }_Z}} \right) + {{\xi _Z}}\left( {1 - u_Z} \right)v_Z{{\Pr \left\{ {\Phi \left| {{B_Z}} \right.} \right\}} \over {\Pr \left\{ {\Phi \left| {{I_Z}} \right.} \right\}}}} \over {1 - {{\xi _Z}}\left( {1 - u_Z} \right)\left( {1 - v_Z} \right) - {{\mathord{\buildrel{\lower3pt\hbox{$\scriptscriptstyle\frown$}}\over
 {\xi}  } }_Z}\left( {1 - u_Z} \right)\left[ {{{\mathord{\buildrel{\lower3pt\hbox{$\scriptscriptstyle\frown$}}\over
 {\zeta}  } }_Z} + v_Z - v_Z{{\mathord{\buildrel{\lower3pt\hbox{$\scriptscriptstyle\frown$}}\over
 {\zeta}  } }_Z} - {{\xi_Z}}{{\mathord{\buildrel{\lower3pt\hbox{$\scriptscriptstyle\frown$}}\over
 {\zeta}  } }_Z}\left( {1 - u_Z} \right)\left( {1 - v_Z} \right)} \right]}},
\end{equation}
\begin{equation}\label{SZBU}
\mathbb E \left\{ {S_Z \left| B_{S_Z} \right.} \right\}={{1 - {{\xi_Z}}\left( {1 - u_Z} \right)\left( {1 - v_Z} \right) + {{\mathord{\buildrel{\lower3pt\hbox{$\scriptscriptstyle\frown$}}\over
 {\xi}  } }_Z}\left( {1 - {{\mathord{\buildrel{\lower3pt\hbox{$\scriptscriptstyle\frown$}}\over
 {\zeta}  } }_Z}} \right)\left( {1 - u_Z} \right)\left( {1 - v_Z} \right){{\Pr \left\{ {\Phi \left| {{I_Z}} \right.} \right\}} \over {\Pr \left\{ {\Phi \left| {{B_Z}} \right.} \right\}}}} \over {1 - {{\xi_Z}}\left( {1 - u_Z} \right)\left( {1 - v_Z} \right) - {{\mathord{\buildrel{\lower3pt\hbox{$\scriptscriptstyle\frown$}}\over
 {\xi}  } }_Z}\left( {1 - u_Z} \right)\left[ {{{\mathord{\buildrel{\lower3pt\hbox{$\scriptscriptstyle\frown$}}\over
 {\zeta}  } }_Z} + v_Z - v_Z{{\mathord{\buildrel{\lower3pt\hbox{$\scriptscriptstyle\frown$}}\over
 {\zeta}  } }_Z} - {{\xi_Z}}{{\mathord{\buildrel{\lower3pt\hbox{$\scriptscriptstyle\frown$}}\over
 {\zeta}  } }_Z}\left( {1 - u_Z} \right)\left( {1 - v_Z} \right)} \right]}},
\end{equation}

\hrulefill
\vspace*{4pt}
\end{figure*}

The average peak AoI of the PIoT and the SIoT adopting the underlay scheme can be finally characterized in (\ref{AoIexpression}), where the general terms $\mathbb E \left[{S_{Z}}\right]$ is given in (\ref{SZ}) and $\mathbb E \left[{Y_{Z}}\right]$ is derived in (\ref{YPU}) and (\ref{YSU}) for the PIoT and the SIoT, respectively.

\section{Asymptotic AoI Analysis}
The derived expressions for the average peak AoI in Section III and IV can be used to evaluate and compare the performances of overlay and underlay schemes. However, due to the complicated structures of the derived expressions, we cannot gain further insights for the impacts of various system parameters on the average peak AoI, and compare the overlay and underlay schemes mathematically. Motivated by this, in this section, we perform the asymptotic analysis of the average peak AoI of the PIoT and the SIoT when the PIoT operates at high SNR, i.e., the outage probability for each transmission of the PIoT approaches to zero.
\subsection{Overlay Scheme}
When the PIoT operates at high SNR and the term ${\varphi _{O,P}} \to 0$, from (\ref{AoIOP}), the average peak AoI of the PIoT can be simplified to
\begin{equation}\label{POasym}
{{\hat \Delta }_P} \approx {1 \over p}.
\end{equation}
For the SIoT, from (\ref{YSO}) and (\ref{SS}), the expectations of $Y_S$ and $S_S$ are simplified to
\begin{equation}
\mathbb E \left[ {Y_S}\right]\approx {{1 - q} \over q} + {1 \over {\left( {1 - {\varphi _{O,S}}} \right)\left( {1 - p} \right)}},
\end{equation}
\begin{equation}
\mathbb E \left[ {S_S}\right]\approx {1 \over {q + \left( {1 - p} \right)\left( {1 - q} \right)\left( {1 - {\varphi _{O,S}}} \right)}}.
\end{equation}
The average peak AoI for the SIoT when the PIoT operates at high SNR can be expressed as
\begin{align}
{{\hat \Delta }_S}&\approx {{1 - 2q} \over q} + {1 \over {\left( {1 - {\varphi _{O,S}}} \right)\left( {1 - p} \right)}} \nonumber\\
&\quad +{1 \over {q + \left( {1 - p} \right)\left( {1 - q} \right)\left( {1 - {\varphi _{O,S}}} \right)}}.\label{SOasym}
\end{align}
\begin{remark}
From (\ref{POasym}), we can see that the average peak AoI of the PIoT is inversely proportional to the generation rate $p$. This is intuitive because we consider a slotted scenario and the higher the generation rate, the lower the average peak AoI. From (\ref{SOasym}), we can see that the average peak AoI of the SIoT decreases as the outage probability ${\varphi _{O,S}}$ reduces and the generation rate $q$ increases. More importantly, the average peak AoI of the SIoT increases as the generation rate of the PIoT $p$ increases. This is understandable because the SIoT has less access to the spectrum as the generation rate of the PIoT increases in the overlay scheme.
\end{remark}
\subsection{Underlay Scheme}
We now perform the asymptotic analysis for the underlay scheme when the PIoT operates at high SNR and the outage probabilities ${\varphi_{U,P}},{{\mathord{\buildrel{\lower3pt\hbox{$\scriptscriptstyle\frown$}}\over
 {\varphi}  } }_{U,P}} \to 0$. First of all, it can be easily verified that at high SNR, the average peak AoI of the PIoT is also ${{\hat \Delta }_P}\approx{1 \over p}$, which is the same as that in the overlay scheme. This is because at high SNR, the PIoT can decode the status update easily even suffering from the interference of the SIoT. The AoI performance of the SIoT when the PIoT operates at high SNR is giving in the following lemma:
\begin{lemma}\label{lemma4}
When the PIoT operates at high SNR, the average peak AoI of the SIoT in the underlay scheme can be expressed as
\begin{equation}\label{SUasym}
\begin{split}
{{\hat \Delta }_S}& \approx{{1 - 2q} \over q}+{1 \over {\left( {1 - p} \right)\left( {1 - {\varphi _{U,S}}} \right) + p\left( {1 - {{{{\mathord{\buildrel{\lower3pt\hbox{$\scriptscriptstyle\frown$}}\over
 \varphi } }}}_{U,S}}} \right)}}+\\
 &\quad {1 \over {q + \left( {1 - q} \right)\left[{\left( {1 - p} \right)\left( {1 - {\varphi _{U,S}}} \right) + p\left( {1 - {{{{\mathord{\buildrel{\lower3pt\hbox{$\scriptscriptstyle\frown$}}\over
 \varphi } }}}_{U,S}}} \right)}\right]}}.
\end{split}
\end{equation}
\end{lemma}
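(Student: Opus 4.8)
The plan is to exploit the fact that, at high SNR, the PIoT never fails a transmission, so its busy/idle pattern decouples from the SIoT and the SIoT reduces to a \emph{standalone} preemptive-ARQ status-update system with an effective per-slot outage probability. Once this reduction is in place, the closed-form expressions already derived for such a system in Section III-A (there written for the PIoT of the overlay scheme) apply verbatim, up to an obvious change of parameters.

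First I would establish the decoupling. When $\varphi_{U,P}\to 0$ and ${{\mathord{\buildrel{\lower3pt\hbox{$\scriptscriptstyle\frown$}}\over {\varphi}  } }_{U,P}}\to 0$, the PAP decodes every transmission of the PD in the slot it is made, so in any slot the PIoT is busy if and only if a new status update was generated at the PD at the beginning of that slot; the previous transmission, if any, has already succeeded and carries no memory forward. Since PD generation is i.i.d. Bernoulli($p$), the indicator ``PIoT busy in slot $t$'' is therefore an i.i.d. Bernoulli($p$) sequence, and it is independent of everything at the SIoT (at high SNR the SIoT cannot influence the PIoT, since both $\varphi_{U,P}$ and ${{\mathord{\buildrel{\lower3pt\hbox{$\scriptscriptstyle\frown$}}\over {\varphi}  } }_{U,P}}$ vanish). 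In the underlay scheme the SD transmits in every slot of a $K_S$- or $S_S$-interval regardless of the PIoT, succeeding with probability $1-\varphi_{U,S}$ when the PIoT is idle and $1-{{\mathord{\buildrel{\lower3pt\hbox{$\scriptscriptstyle\frown$}}\over {\varphi}  } }_{U,S}}$ when it is busy. Hence each SIoT transmission succeeds, i.i.d. across slots and independently of the SD's own generation process, with probability $\rho=(1-p)(1-\varphi_{U,S})+p\,(1-{{\mathord{\buildrel{\lower3pt\hbox{$\scriptscriptstyle\frown$}}\over {\varphi}  } }_{U,S}})$, i.e., the SIoT sees an effective per-slot outage probability $1-\rho$.

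Next I would observe that the SIoT is now \emph{exactly} the model analyzed for the PIoT in Section III-A after the substitution $(p,\varphi_{O,P})\mapsto(q,1-\rho)$: classical ARQ, preemption of new updates, Bernoulli($q$) generation and i.i.d. per-slot outage $1-\rho$. Plugging this substitution into (\ref{YP}) and (\ref{SP}) gives $\mathbb{E}[Y_S]\approx\frac{1-q}{q}+\frac{1}{\rho}$ and $\mathbb{E}[S_S]\approx\frac{1}{\rho+q(1-\rho)}$, and inserting these into (\ref{AoIexpression}) and using $\frac{1-q}{q}-1=\frac{1-2q}{q}$ and $\rho+q(1-\rho)=q+(1-q)\rho$ yields precisely (\ref{SUasym}). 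As an independent cross-check, one can instead send $\varphi_{U,P},{{\mathord{\buildrel{\lower3pt\hbox{$\scriptscriptstyle\frown$}}\over {\varphi}  } }_{U,P}}\to 0$ directly in the stationary distribution (\ref{MCstationary}), which gives $\pi_1+\pi_3\to 1-p$, $\pi_2+\pi_4\to p$ and $\pi_1+\pi_2\to\frac{(1-q)\rho}{q+(1-q)\rho}$, and then substitute into (\ref{YSU}), (\ref{SZ}) and (\ref{PhiIU})--(\ref{SZBU}); both routes must produce the same expression.

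The step I expect to be the main obstacle is making the decoupling watertight---in particular, checking that conditioning on ``the status update is eventually decoded without being preempted'', which is built into the definitions of $S_S$ and of the event $\Phi$, does not reintroduce a dependence between the SIoT and the PIoT. This holds because, at high SNR, the interference state seen in every SIoT slot is i.i.d. Bernoulli($p$) and independent of the SD's generation process, so the SIoT's sample paths are statistically identical to those of the standalone system with outage $1-\rho$; once that equivalence is granted, the rest is the routine substitution above. If one prefers to avoid the probabilistic reduction entirely, the direct-limit computation sketched above is purely mechanical but considerably longer.
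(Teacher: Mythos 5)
Your proposal is correct, and it reaches (\ref{SUasym}) by a genuinely different route from the paper. The paper's proof (Appendix~C) is a direct limit computation: it sets $\varphi_{U,P},{{\mathord{\buildrel{\lower3pt\hbox{$\scriptscriptstyle\frown$}}\over{\varphi}}}_{U,P}}\to 0$ in the transition matrix (\ref{MCtransition}), solves $\pmb\pi\pmb M=\pmb\pi$ explicitly to get $\pi_1=\frac{(1-p)\phi}{q+\phi}$, $\pi_2=\frac{p\phi}{q+\phi}$, $\pi_3=\frac{(1-p)q}{q+\phi}$, $\pi_4=\frac{pq}{q+\phi}$ with $\phi=(1-q)\rho$ in your notation, and then pushes the limit through (\ref{YSU}) and the conditional service-time expressions (\ref{SZIU})--(\ref{SZBU}). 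Your argument instead establishes the structural reason the answer looks the way it does: at high SNR the PD's service always completes in one slot, so the ``PIoT busy'' indicator is i.i.d.\ Bernoulli($p$) and independent of the SIoT, the per-slot success of the SD is i.i.d.\ Bernoulli($\rho$) with $\rho=(1-p)(1-\varphi_{U,S})+p(1-{{\mathord{\buildrel{\lower3pt\hbox{$\scriptscriptstyle\frown$}}\over{\varphi}}}_{U,S}})$, and the SIoT is therefore sample-path equivalent to the standalone system of Section~III-A under $(p,\varphi_{O,P})\mapsto(q,1-\rho)$, whence (\ref{YP}) and (\ref{SP}) give the result immediately. Your handling of the one delicate point --- that conditioning on the event $\Phi$ does not reintroduce coupling, because the full distribution of the SIoT's sample paths already matches the standalone model --- is the right justification, and your cross-check against the stationary distribution ($\pi_1+\pi_3\to 1-p$, $\pi_1+\pi_2\to\frac{(1-q)\rho}{q+(1-q)\rho}$) agrees with the paper's computation. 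The paper's route is purely mechanical and requires no probabilistic argument; yours is shorter, reuses Lemma~\ref{lemma1} and (\ref{YP})--(\ref{SP}), and makes transparent that the asymptotic formula is exactly the single-system peak-AoI formula with the outage probability averaged over the interference state.
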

\begin{proof}
See Appendix \ref{app3}.
\end{proof}
\begin{remark}
From (\ref{SUasym}), we can see that the average peak AoI of the SIoT gets smaller with the decrease of the outage probability term ${{{{\mathord{\buildrel{\lower3pt\hbox{$\scriptscriptstyle\frown$}}\over
 \varphi } }}}_{U,S}}$ as well as the increase of the generation rate $q$. Moreover, the average peak AoI of the SIoT becomes larger with the increase of the generation rate of the PIoT $p$.
\end{remark}
\subsection{Comparison of the Overlay and Underlay Schemes}
We now compare the performance of the overlay and underlay schemes in terms of the average peak AoI of the SIoT by comparing (\ref{SOasym}) and (\ref{SUasym}). Specifically, the difference between them lies in the terms ${\left( {1 - {\varphi _{O,S}}} \right)\left( {1 - p} \right)}$ and ${\left( {1 - p} \right)\left( {1 - {\varphi _{U,S}}} \right) + p\left( {1 - {{{{\mathord{\buildrel{\lower3pt\hbox{$\scriptscriptstyle\frown$}}\over
 \varphi } }}}_{U,S}}} \right)}$. Due to the fact that $\left( {1 - {\varphi _{U,S}}} \right)>\left( {1 - {{{{\mathord{\buildrel{\lower3pt\hbox{$\scriptscriptstyle\frown$}}\over
 \varphi } }}}_{U,S}}} \right)$, we can deduce that both terms are decreasing functions of $p$. Additionally, the limits of the two terms are $\mathop {\lim }\limits_{p \to 0} \left( {1 - {\varphi _{O,S}}} \right)\left( {1 - p} \right) = 1 - {\varphi _{O,S}}$, $\mathop {\lim }\limits_{p \to 1} \left( {1 - {\varphi _{O,S}}} \right)\left( {1 - p} \right) = 0$, $\mathop {\lim }\limits_{p \to 0} {\left( {1 - p} \right)\left( {1 - {\varphi _{U,S}}} \right) + p\left( {1 - {{{{\mathord{\buildrel{\lower3pt\hbox{$\scriptscriptstyle\frown$}}\over
 \varphi } }}}_{U,S}}} \right)} = {1 - {\varphi _{U,S}}}$ and $\mathop {\lim }\limits_{p \to 1} {\left( {1 - p} \right)\left( {1 - {\varphi _{U,S}}} \right) + p\left( {1 - {{{{\mathord{\buildrel{\lower3pt\hbox{$\scriptscriptstyle\frown$}}\over
 \varphi } }}}_{U,S}}} \right)} = {1 - {{{{\mathord{\buildrel{\lower3pt\hbox{$\scriptscriptstyle\frown$}}\over
 \varphi } }}}_{U,S}}}$. Moreover, recall that the SIoT is implemented with a peak transmit power constraint and the outage probability for each round of transmission of the overlay scheme is lower than that of the underlay scheme, i.e., ${\varphi _{O,S}} <{\varphi _{U,S}}$. With the above analysis, we can thus deduce that there exists a critical generation rate of the PIoT $0<p^*<1$ such that the average AoI of the SIoT is identical for both the overlay and underlay schemes. The critical generation rate is the solution to the equation ${\left( {1 - {\varphi _{O,S}}} \right)\left( {1 - p} \right)}={\left( {1 - p} \right)\left( {1 - {\varphi _{U,S}}} \right) + p\left( {1 - {{{{\mathord{\buildrel{\lower3pt\hbox{$\scriptscriptstyle\frown$}}\over
 \varphi } }}}_{U,S}}} \right)}$, and it is given by
 \begin{equation}
 p^*= {{{\varphi _{U,S}}-{\varphi _{O,S}}}\over{1-{{{{\mathord{\buildrel{\lower3pt\hbox{$\scriptscriptstyle\frown$}}\over
 \varphi } }}}_{U,S}}+{\varphi _{U,S}}-{\varphi _{O,S}}}}.
 \end{equation}
 More specifically, if the generation rate of the PIoT $p>p^*$, the underlay scheme outperforms the overlay scheme in terms of the average peak AoI of the SIoT and the overlay scheme is superior to the underlay scheme when $p<p^*$.

\section{Numerical and Simulation Results}
In this section, we present the numerical and simulation results of the considered overlay and underlay schemes in terms of the average peak AoI of the PIoT and the SIoT. In order to capture the effect of the path-loss, we set the average power gains of the desired links ${\Omega_{XY}} = {10^{-3} \over {1 + {d_{XY}^\omega}}}$, where $X,Y \in \left\{P,S\right\}$ represents the IoT devices and the dedicated access points of the PIoT and the SIoT, respectively. $d_{XY}$ denotes the distance between nodes $X$ and $Y$, and $\omega  \in \left[ {2,5} \right]$ is the path-loss factor. Note that a 30 dB average signal power attenuation is assumed at a reference distance of 1 meter (m) in the above channel model. In all the following simulations, we set the distances between the PD and the PAP, the SD and the SAP $d_{PP}=d_{SS}=100$ m, the path-loss factor $\omega  = 3$, the noise power $N_0=- 80$ dBm, and the transmission rates $R_P=R_S=1$ bps/Hz.
\begin{figure}
\centering
  {\scalebox{0.5}{\includegraphics {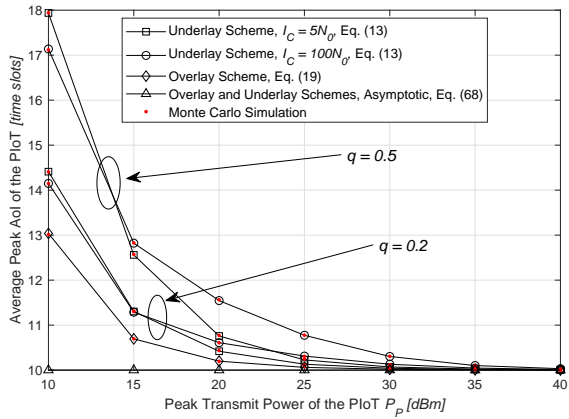}}}
\caption{The average peak AoI of the PIoT versus the peak transmit power of the PIoT for different generation rate $q$ and interference constraint $I_C$, where $P_S=25$ dBm, $d_{SP}=100$ m, $d_{PS}=150$ m and $p=0.1$.
\label{montsimuP1}}
\end{figure}
\begin{figure}
\centering
  {\scalebox{0.5}{\includegraphics {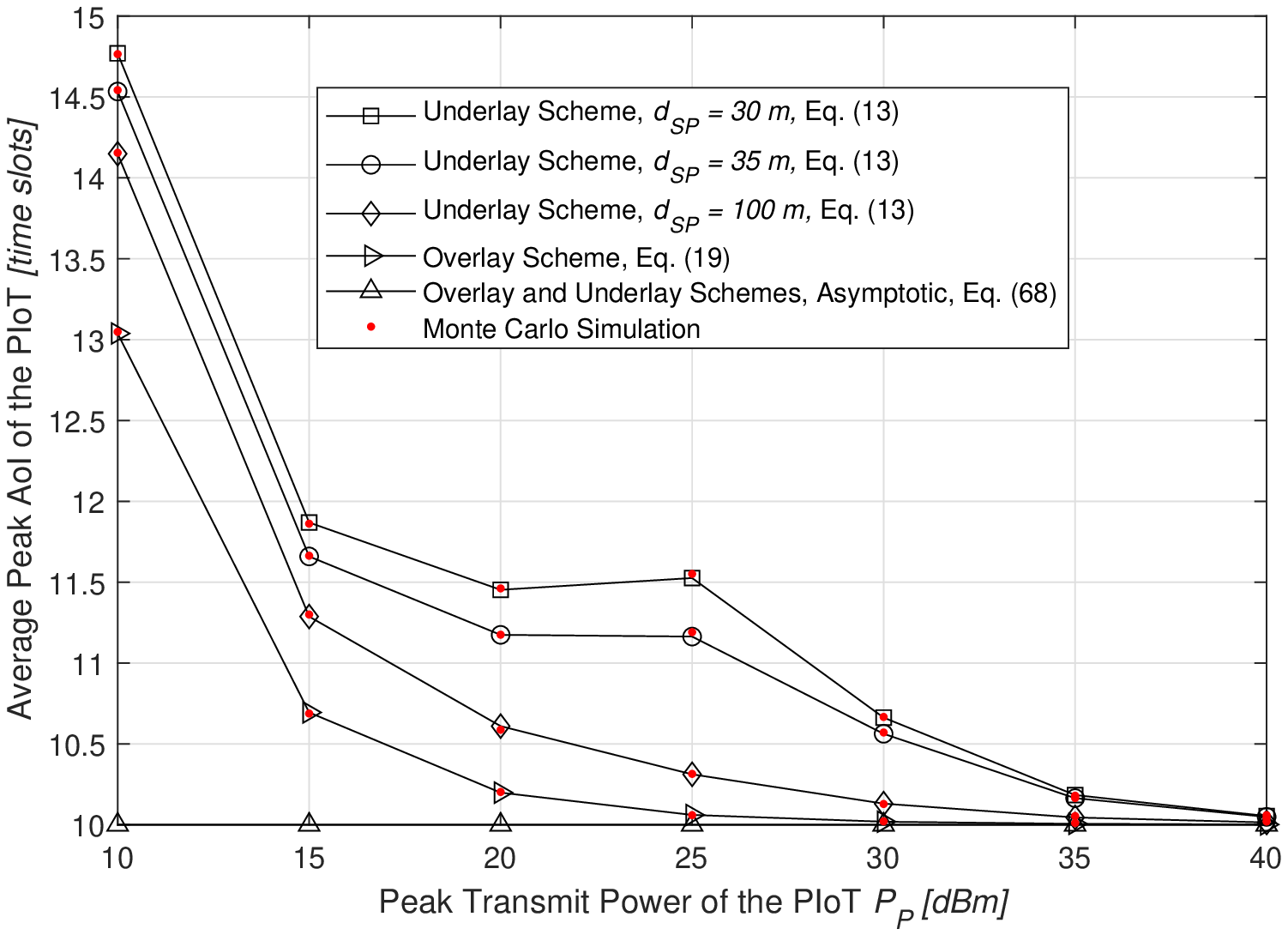}}}
\caption{The average peak AoI of the PIoT versus the peak transmit power of the PIoT for different distances between the SD and the PAP, where $P_S=25$ dBm, $d_{PS}=150$ m, $p=0.1$, $q = 0.2$, and $I_C = 100N_0$.
\label{montsimuP2}}
\end{figure}
\begin{figure}
\centering
  {\scalebox{0.6}{\includegraphics {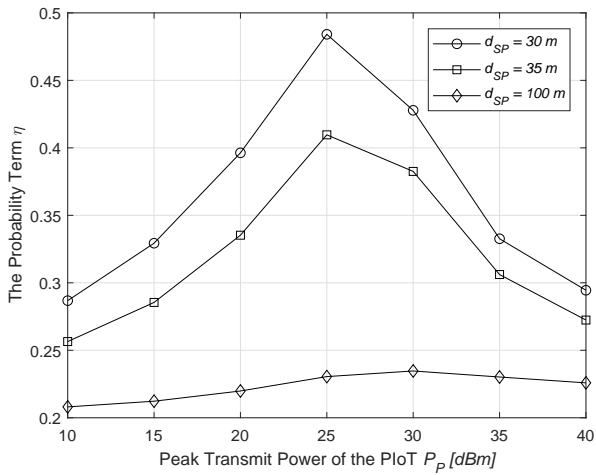}}}
\caption{The probability $\eta$ versus $P_P$ for the same settings as Fig. \ref{montsimuP2}.
\label{fig:review}}
\end{figure}
Fig. \ref{montsimuP1} and Fig. \ref{montsimuP2} depict the average peak AoI of the PIoT versus the peak transmit power of the PIoT for different system setups. From Fig. \ref{montsimuP1} and \ref{montsimuP2}, we first can see that both the derived analytical and asymptotic expressions of the average peak AoI for the PIoT coincide well with the Monte Carlo Simulation, which validates our theoretical analysis provided in Section III and IV for the PIoT. Besides, from Fig. \ref{montsimuP1} and Fig. \ref{montsimuP2}, we can observe that compared to the overlay scheme, the underlay scheme introduces degradation to the average peak AoI of the PIoT. The degradation generally decreases as the generation rate of the SIoT $q$ and the interference constraint $I_C$ decreases. It also decreases as the distance between the SD and the PAP $d_{SP}$ increases. This is understandable since the SIoT generates interference to the PIoT in the underlay scheme and the more the generated interference, the larger the degradation caused by the underlay scheme. Moreover, in Fig. \ref{montsimuP1}, we can find out that the increase of $I_C$ may improve the performance of the PIoT in the underlay scheme when the peak transmit power of the PIoT is low. This is because that an increasing of $I_C$ on one hand degrades the performance of the PIoT by generating more interference. On the other hand, the transmit power of the SIoT is also increased such that it can use fewer slots to finish the service of its generated status updates. Therefore, the PIoT now has more access to the spectrum without suffering from the interference of the SIoT. This is particularly important when the peak transmit power of the PIoT is low since it can hardly decode the status updates correctly when suffering from the interference of the SIoT.

In Fig. \ref{montsimuP2}, we can see that the average peak AoI of the PIoT for the overlay scheme decreases as the peak transmit power of the PIoT $P_P$ increases. This observation verifies the analysis given in Remark 1. We can also observe from Fig. \ref{montsimuP2} that there exists a non-convex phenomenon in the underlay scheme for $d_{SP} =30,35$ m. To explain this, we define a new parameter $\eta$ as the probability of the event that the PIoT suffers from interference when it is transmitting (i.e., the SIoT transmits at the same time). We clarify that increasing $P_P$ will influence both the outage probabilities and the value of $\eta$. The corresponding change of the AoI for the PIoT is the composition of these effects. On the one hand, it is straightforward that increasing $P_P$ will reduce the PIoT's outage probabilities for both cases with and without interference from the SIoT, which will further reduce the AoI of the PIoT. On the other hand, increasing $P_P$ will affect the value of $\eta$. However, the impact of $P_P$ on $\eta$ is not so straightforward due to the complicated interplay between the PIoT and the SIoT in the underlay scheme. To have a better understanding, we simulate the curves of $\eta$ versus $P_P$ in Fig. \ref{fig:review} for the same three settings as in Fig. \ref{montsimuP2}. We can observe from Fig \ref{fig:review} that for the cases $d_{SP}=30,35$ m, the value of $\eta$ first increases then decreases as $P_P$ increases, while for $d_{SP}=100$ m, the value of $\eta$ is relatively stable when $P_P$ changes. The increase of $\eta$ means that the PIoT has a higher probability to collide with the SIoT. This makes the PIoT's outage probability tends to equal to ${{\mathord{\buildrel{\lower3pt\hbox{$\scriptscriptstyle\frown$}}\over{\varphi}  } }_{U,P}}$ other than the smaller one $\varphi_{U,P}$, which will increase the AoI of the PIoT. The increase of $\eta$ and the decrease of ${{\mathord{\buildrel{\lower3pt\hbox{$\scriptscriptstyle\frown$}}\over{\varphi}  } }_{U,P}}$ at the same time can lead to the decrease or increase of the AoI. For the cases $d_{SP}=30,35$ m, we observe a slight increase around $P_P= 25$ dBm in Fig \ref{montsimuP2}, which could be caused by the large value of $\eta$ as shown in Fig \ref{fig:review}. After $P_P$ goes beyond 25 dBm, the value of $\eta$ starts to decrease as $P_P$ increases. Together with the decrease of the outage probabilities ${{\mathord{\buildrel{\lower3pt\hbox{$\scriptscriptstyle\frown$}}\over{\varphi}  } }_{U,P}}$  and $\varphi_{U,P}$, they jointly lead to a sharp decrease of the AoI of the PIoT.
\begin{figure}
\centering
  {\scalebox{0.5}{\includegraphics {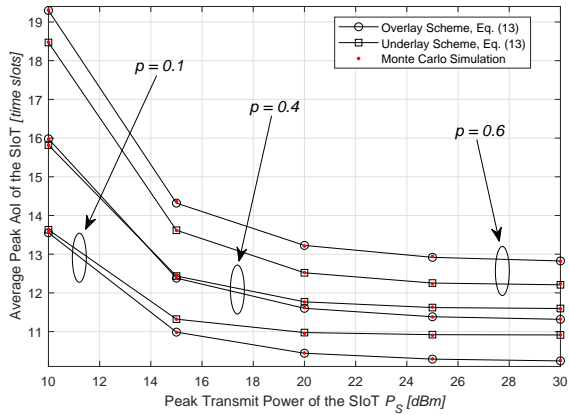}}}
\caption{The average peak AoI of the SIoT versus the peak transmit power of the SIoT for different generation rate $p$, where $P_P=25$ dBm, $d_{SP}=80$ m, $d_{PS}=150$ m, $q = 0.1$ and $I_C = 5N_0$.
\label{montsimuS1}}
\end{figure}
\begin{figure}
\centering
  {\scalebox{0.5}{\includegraphics {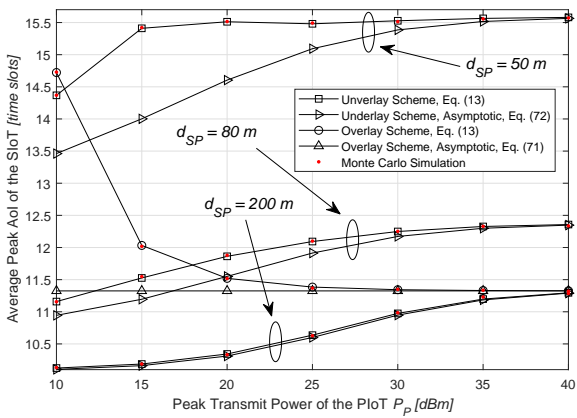}}}
\caption{The average peak AoI of the SIoT versus the peak transmit power of the PIoT for different distance between the SD and the PAP $d_{SP}$, where $P_S=25$ dBm, $d_{PS}=100$ m, $p=0.4$, $q = 0.1$ and $I_C = 5N_0$.
\label{montsimuS2}}
\end{figure}

Fig. \ref{montsimuS1} and Fig. \ref{montsimuS2} show the average peak AoI of the SIoT versus the peak transmit power of the SIoT and the PIoT, respectively. Figs. \ref{montsimuS1}
and \ref{montsimuS2} validate the derived analytical and asymptotic expressions of the average peak AoI of the SIoT. We can see in Fig. \ref{montsimuS2} that the derived asymptotic expressions for the average peak AoI of the SIoT approach their exact counterparts for $P_P=40$ dBm for all the simulated cases. Besides, in Fig. \ref{montsimuS1}, we observe that both the average peak AoI of the overlay and underlay schemes decrease as the peak transmit power constraint $P_S$ grows. The overlay and underlay schemes perform similarly in terms of the average peak AoI of the SIoT when $p=0.4$. Additionally, the overlay scheme outperforms the underlay scheme for $p=0.1$ and the underlay scheme outperforms the overlay scheme for $p=0.6$. This observation indicates that the overlay scheme is superior to the underlay scheme in terms of the average peak AoI of the SIoT when the generation rate of the PIoT is low. This is because the SIoT can easily get access to the spectrum when $p$ is low and the transmit power of the SD in the overlay scheme is not confined by the interference constraint as in the underlay scheme. However, when $p$ is high, the SIoT can hardly access the spectrum and the underlay scheme can overcome this problem. Similar to the analysis given in Fig \ref{montsimuP2} and Fig. \ref{fig:review}, in Fig. \ref{montsimuS2}, the non-convex shape for the average peak AoI of the SIoT exists for small $d_{SP}$, e.g., $d_{SP} = 50$ m. This is because that increasing $P_P$ on one hand increases the outage probability ${{\mathord{\buildrel{\lower3pt\hbox{$\scriptscriptstyle\frown$}}\over{\varphi}  } }_{U,S}}$  of the SIoT, leading to an increase of the average peak AoI. On the other hand, it also changes the probability of the event that the SIoT suffers from interference when it is transmitting (i.e., the PIoT transmits at the same time). The overall change of the AoI for the SIoT is the composition of these effects.
\begin{figure}
\centering
  {\scalebox{0.5}{\includegraphics {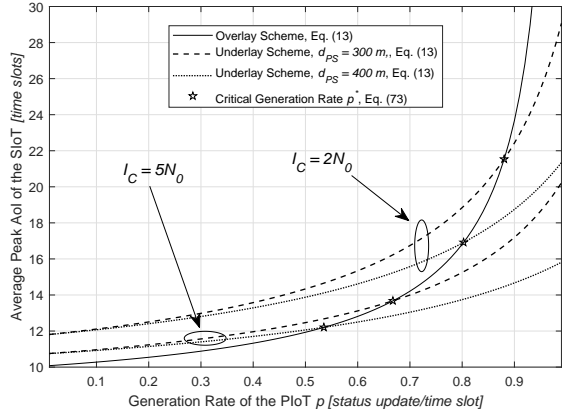}}}
\caption{The average peak AoI of the SIoT versus the generation rate of the PIoT for different system setups, where $P_P=40$ dBm, $P_S = 25$ dBm, $d_{SP}=80$ m and $q = 0.1$.
\label{compare1}}
\end{figure}
\begin{figure}
\centering
  {\scalebox{0.5}{\includegraphics {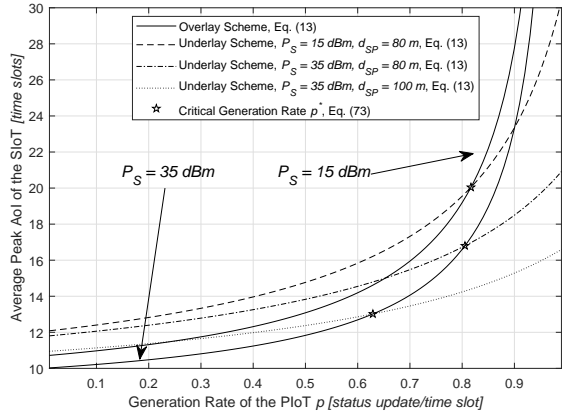}}}
\caption{The average peak AoI of the SIoT versus the generation rate of the PIoT for different system setups, where $P_P=40$ dBm, $d_{PS}=400$ m, $q = 0.1$ and $I_C =2N_0$.
\label{compare2}}
\end{figure}

We now compare the average peak AoI of the SIoT when the PIoT operates at high SNR in Fig. \ref{compare1} and Fig. \ref{compare2}, i.e., $P_P = 40$ dBm. From Figs. \ref{compare1} and \ref{compare2}, we can first observe that there exists a critical generation rate of the PIoT $p^*$ that the average peak AoIs of the SIoT for the overlay and underlay schemes are identical for all the simulated cases. More importantly, when the generation rate of the PIoT $p<p^*$, the overlay scheme outperforms the underlay scheme in terms of the average peak AoI of the SIoT. When $p>p^*$, the underlay scheme is better than the overlay scheme. This observation validates the analysis given in Section V-C. Besides, from Fig. \ref{compare1}, we can see that $p^*$ shifts to the right as $I_C$ and $d_{PS}$ decreases, which indicates that the overlay scheme is suitable to a CR-IoT with a lower interference constraint $I_C$ and a higher interference power between the PD and the SAP. Finally, in Fig. \ref{compare2}, we can find out that the critical value $p^*$ grows as $P_S$ and $d_{SP}$ decreases, which shows that the overlay scheme is preferable for a CR-IoT with a lower secondary peak transmit power constraint and higher interference power between the SD and the PAP.


%
\section{Conclusions}
In this paper, we have analyzed the AoI performance of a CR-IoT consisting of one PIoT and one SIoT. Both the PIoT and the SIoT are assumed to adopt the classical ARQ, in which the PIoT (SIoT) keeps transmitting the current status update until the associated access point decodes the status update correctly or a new status update is generated. The preemption model was considered that the PIoT (SIoT) discards the current status update when a new status update is generated. By adopting the overlay and underlay schemes as the access strategies, we then analyzed both the exact and asymptotic expressions of the average peak AoI for the PIoT (SIoT) in closed-forms. Based on the derived asymptotic expressions, we further obtained a critical value of the generation rate for the PIoT, which determines the superiority between the overlay and the underlay schemes. Numerical results showed that the overlay scheme is more suitable for a system with a lower generation rate at the PIoT, a lower peak transmit power constraint of the SIoT, and a higher interference between the PIoT and the SIoT.
\appendices
\section{Proof of Lemma \ref{lemma1}}\label{app1}
The service of the status update is finished only when it is not preempted by other status updates and finally decoded correctly at the PAP. Since the outage probability for each round of transmission and the generation of new status updates are independent, the probability that the status update is successfully decoded at the PAP after $k$ times of transmission is given by
\begin{equation}\label{app1eq1}
\chi_k = {{{\varphi} _{O,P}}^{k - 1}}{\left( {1 - p} \right)^{k - 1}}\left( {1 - {{\varphi} _{O,P}}} \right).
\end{equation}
The total probability that the status update is not preempted and successfully received by the PAP is given by
\begin{equation}\label{app1eq2}
{\sum\limits_{k = 1}^\infty  {{\chi _k}} }={{1 - {\varphi _{O,P}}} \over {1 - {\varphi _{O,P}} + p{\varphi _{O,P}}}},
\end{equation}
where we use \cite[Eq. (0.231.1)]{Tableofintegral} to simplify the summation term. The PMF of $S_P$ can now be derived as
\begin{equation}
{f_{{S_P}}}\left( k \right)={{{\chi _k}} \over {\sum\limits_{k = 1}^\infty  {{\chi _k}} }},
\end{equation}
which gives the desired result in (\ref{PMFSP}) after some manipulations.
\section{Proof of Lemma \ref{lemma2}}\label{app2}
\begin{figure}[htb]
\centering
 \subfigure[$W_S<W_P$]
  {\scalebox{0.5}{\includegraphics {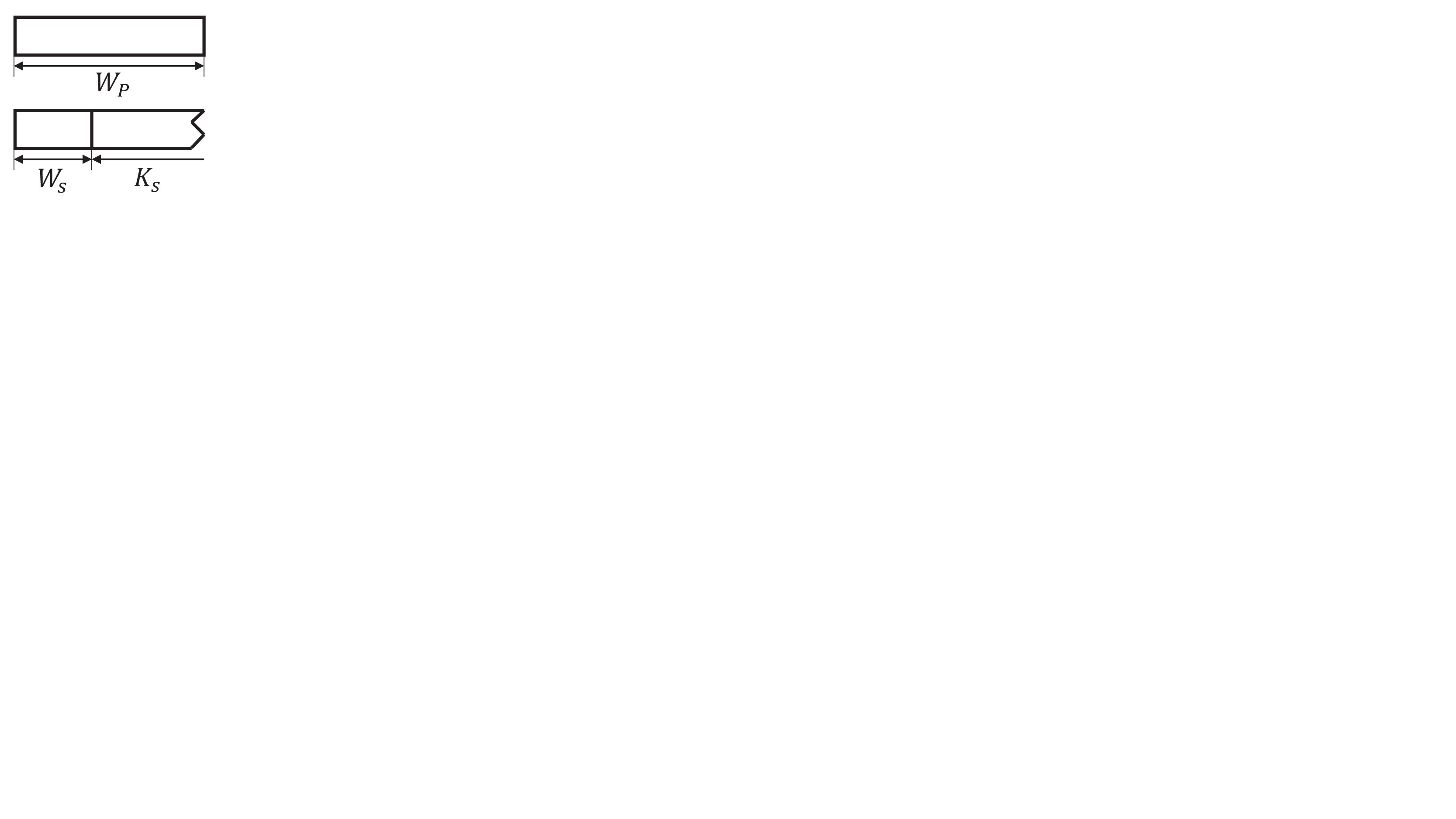}
  \label{fig:WSWP1}}}
\hfil
 \subfigure[$W_S \ge W_P$]
  {\scalebox{0.5}{\includegraphics {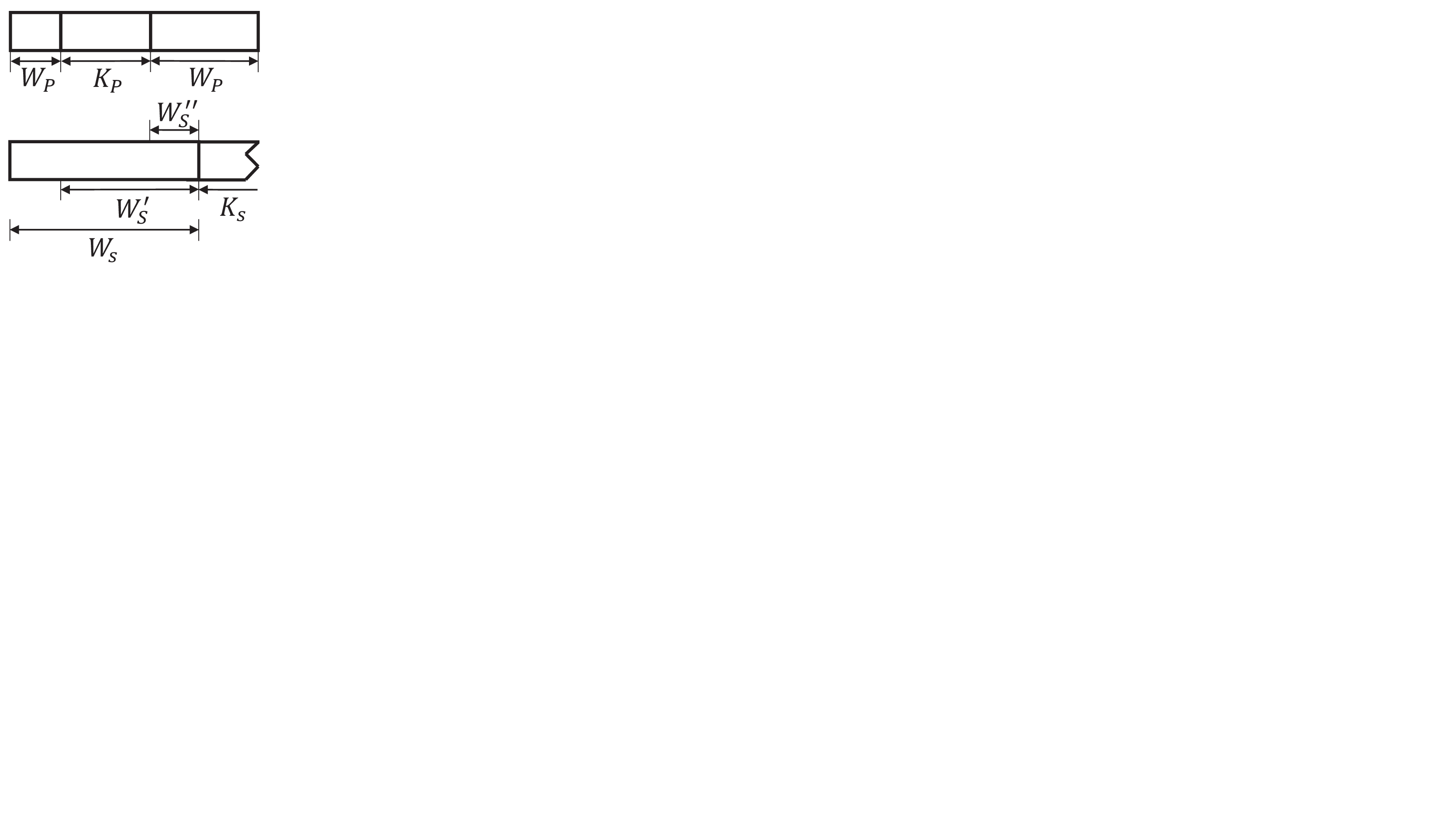}
\label{fig:WSWP2}}}
\caption{The illustration of the event $I_{K_S}$.
\label{WSWP}}
\end{figure}
It is important to notice that when the SIoT finishes the service of a status update and waits for the generation of a new status update, the PIoT is also waiting for a new status update to be generated. This is because that the SD can transmit its status update only when the PD is idle in the overlay scheme. The probability of $I_{K_S}$ can be calculated recursively from the following analysis. As illustrated in Fig. \ref{fig:WSWP1}, $I_{K_S}$ happens when the waiting time of the SIoT is shorter than the PIoT, i.e., $W_S<W_P$, and the initial slot of $K_S$ is thus an idle slot. Besides, as depicted in Fig. \ref{fig:WSWP2}, if the waiting time $W_S \ge W_P$, $I_{K_S}$ may also happen when the waiting time of the SIoT is longer than the waiting period and the transmission period of the PIoT. During the second waiting period of the PIoT, a new status update is generated at the SIoT and the initial slot of $K_S$ is also an idle slot. Recall that $W_P$, $K_P$ and $W_S$ follow the Bernoulli process, and such process is a memoryless process. As depicted in Fig. \ref{fig:WSWP2}, the remaining interval $W_S^\prime$ when $W_S$ is longer than $W_P$, and the remaining interval $W_S^{\prime\prime}$ when $W_S$ is longer than both the $W_P$ and $K_P$, also follow the same geometric distribution as $W_S$. The probability for the case shown in Fig. \ref{fig:WSWP2} can thus be evaluated as $\Pr \left\{ {{W_S} \ge {W_P}} \right\}\Pr \left\{ {{W_S} \ge {K_P}} \right\}\Pr \left\{ {{W_S} < {W_P}} \right\}$.

Fig. \ref{fig:WSWP1} and \ref{fig:WSWP2} indicate the event $I_{K_S}$ for the cases that the PIoT has no transmitting period, and one transmitting period during the waiting time $W_S$, respectively. In general, the PIoT may experience an arbitrary number of transmitting periods during $W_S$ and $I_{K_S}$ still happens, from the above analysis, the probability term $\Pr\left\{I_{K_S}\right\}$ can be derived as
\begin{align}
&\Pr\left\{I_{K_S}\right\} \nonumber\\
&=\Pr \left\{ {{W_S} < {W_P}} \right\}\sum\limits_{n = 0}^\infty  {{{\left( {\Pr \left\{ {{W_S} \ge {K_P}} \right\}\Pr \left\{ {{W_S} \ge {W_P}} \right\}} \right)}^n}}\nonumber\\
&= {{\Pr \left\{ {{W_S} < {W_P}} \right\}} \over {1 - \Pr \left\{ {{W_S} \ge {K_P}} \right\}\Pr \left\{ {{W_S} \ge {W_P}} \right\}}}.\label{IK1}
\end{align}
With the PMFs of $W_P$, $K_P$ and $W_S$ given in (\ref{PMFWP}), (\ref{PMFKP}) and (\ref{PMFWS}), we can derive that $\Pr \left\{ {{W_S} < {W_P}} \right\} = {{q - pq} \over {q + p - pq}}$, $\Pr \left\{ {{W_S} \ge {W_P}} \right\} = {p \over {q + p - pq}}$ and $\Pr \left\{ {{W_S} \ge {K_P}} \right\} = {{\left( {1 - {\varphi _{O,P}}} \right)\left( {1 - q} \right)} \over {1 - {\varphi _{O,P}} + q{\varphi _{O,P}}}}$. The desired expression of $\Pr\left\{I_{K_S}\right\}$ in (\ref{IK}) can be obtained by substituting these results into (\ref{IK1}), and the term $\Pr\left\{B_{K_S}\right\}$ can be obtained by $\Pr\left\{B_{K_S}\right\}=1-\Pr\left\{I_{K_S}\right\}$.
\section{Proof of Lemma \ref{lemma4}}\label{app3}
When the PIoT operates at high SNR and ${\varphi_{U,P}},{{\mathord{\buildrel{\lower3pt\hbox{$\scriptscriptstyle\frown$}}\over
 {\varphi}  } }_{U,P}} \to 0$, the structure of the transition matrix in (\ref{MCtransition}) can be simplified. By solving the linear equations with four variables $\pmb \pi \pmb {M}= \pmb \pi $ and let $\phi=\left( {1 - p} \right)\left( {1 - q} \right)\left( {1 - {\varphi _{U,S}}} \right) + p\left( {1 - q} \right)\left( {1 - {{{{\mathord{\buildrel{\lower3pt\hbox{$\scriptscriptstyle\frown$}}\over
 \varphi } }}}_{U,S}}} \right)$ for notation simplicity, the stationary distribution can be derived in an explicit form as ${\pi _1}= {{\left(1-p\right)\phi} \over {q + \phi}}$, ${\pi _2} = {{p\phi} \over {q + \phi}}$, ${\pi _3} = {{\left(1-p\right)q} \over {q +\phi}}$, and ${\pi _4} = {{pq} \over {q + \phi}}$. Substituting theses results into (\ref{YSU}), the expectation of $Y_S$ is simplified to
 \begin{equation}\label{YSasym}
\mathbb{E}\left[Y_S\right]\approx{{1 - q} \over q}+{1 \over {\left( {1 - p} \right)\left( {1 - {\varphi _{U,S}}} \right) + p\left( {1 - {{{{\mathord{\buildrel{\lower3pt\hbox{$\scriptscriptstyle\frown$}}\over
 \varphi } }}}_{U,S}}} \right)}}.
\end{equation}
Substituting ${\varphi_{U,P}},{{\mathord{\buildrel{\lower3pt\hbox{$\scriptscriptstyle\frown$}}\over
 {\varphi}  } }_{U,P}}\to0$ into (\ref{SZIU}) and (\ref{SZBU}), the conditional expectations of the service time of the SIoT can be simplified to
\begin{equation}
\mathbb E \left[ {S_S \left| I_{S_S} \right.} \right]\approx{{1 - {{{{\mathord{\buildrel{\lower3pt\hbox{$\scriptscriptstyle\frown$}}\over
 \varphi } }}}_{U,S}}p\left( {1 - q} \right) + {\varphi _{U,S}}p\left( {1 - q} \right){\Pr \left\{ {\Phi \left| B_S \right.} \right\} \over \Pr \left\{ {\Phi \left| I_S \right.} \right\}}} \over \phi },
\end{equation}
\begin{equation}
\begin{split}
&\quad \mathbb E \left[ {S_S \left| B_{S_S} \right.} \right]\\
&\approx{{1 - {\varphi _{U,S}}\left( {1 - p} \right)\left( {1 - q} \right) + {{{{\mathord{\buildrel{\lower3pt\hbox{$\scriptscriptstyle\frown$}}\over
 \varphi } }}}_{U,S}}\left( {1 - p} \right)\left( {1 - q} \right){\Pr \left\{ {\Phi \left| I_S \right.} \right\} \over \Pr \left\{ {\Phi \left| B_S \right.} \right\}}} \over \phi }.
\end{split}
\end{equation}
It can be verified that when the PIoT operates at high SNR, the probability that new status updates of the SIoT is generated in an idle slot or in a busy slot is given by $\Pr \left\{ {{I_S}} \right\}=1-p$ and $\Pr \left\{ {{B_S}} \right\}=p$. According to (\ref{SZ}), (\ref{ISZ}) and (\ref{BSZ}), the expectation of $S_S$ when the PIoT operates at high SNR can thus be derived as
\begin{equation}\label{SSasym}
\begin{split}
\mathbb E \left[ {{S_S}} \right]&\approx {{\left(1-p\right)\Pr \left\{ {\Phi \left| {{I_Z}} \right.} \right\}\mathbb E \left[ {S_S \left| I_{S_S} \right.} \right]+p\Pr \left\{ {\Phi \left| {{B_Z}} \right.} \right\}} \over {\left(1-p\right)\Pr \left\{ {\Phi \left| {{I_Z}} \right.} \right\} + p\Pr \left\{ {\Phi \left| {{B_Z}} \right.} \right\}}}\\
&\quad + {{p\Pr \left\{ {\Phi \left| {{B_Z}} \right.} \right\}\mathbb E \left[ {S_S \left| B_{S_S} \right.} \right]} \over {\left(1-p\right)\Pr \left\{ {\Phi \left| {{I_Z}} \right.} \right\} + p\Pr \left\{ {\Phi \left| {{B_Z}} \right.} \right\}}}\\
 &\approx{1 \over {q + \left( {1 - q} \right)\left[{\left( {1 - p} \right)\left( {1 - {\varphi _{U,S}}} \right) + p\left( {1 - {{{{\mathord{\buildrel{\lower3pt\hbox{$\scriptscriptstyle\frown$}}\over
 \varphi } }}}_{U,S}}} \right)}\right]}}.
\end{split}
\end{equation}
Substituting (\ref{YSasym}) and (\ref{SSasym}) into (\ref{AoIexpression}), we obtain the desired result for the SIoT given in (\ref{SUasym}) for the case that the PIoT operates at high SNR.
\ifCLASSOPTIONcaptionsoff
  \newpage
\fi

\bibliographystyle{IEEEtran}
\bibliography{References}
\begin{IEEEbiography} [{\includegraphics[width=1in,height=1.25in,clip,keepaspectratio]{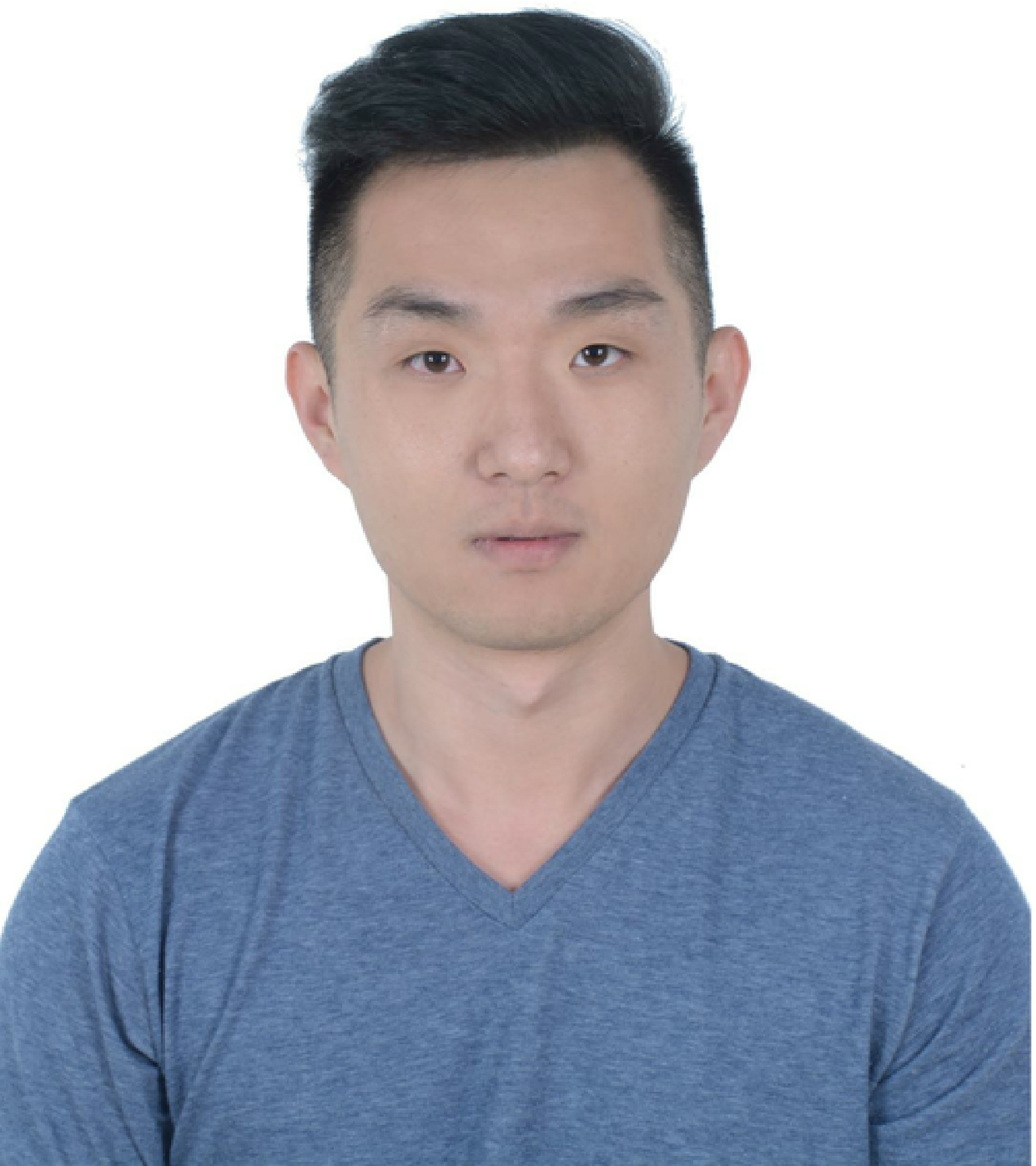}}]
{Yifan Gu} received the B.S. and Ph. D. degrees in Electrical Engineering from The University of Sydney in 2015 and 2019, respectively. He is currently a postdoctoral researcher at the University of Sydney. His research with Prof. Yonghui Li and Prof. Branka Vucetic involves relaying communication, age of information (AoI), industrial Internet of Things (IoT), and industrial wireless communication.
\end{IEEEbiography}

\begin{IEEEbiography} [{\includegraphics[width=1in,height=1.25in,clip,keepaspectratio]{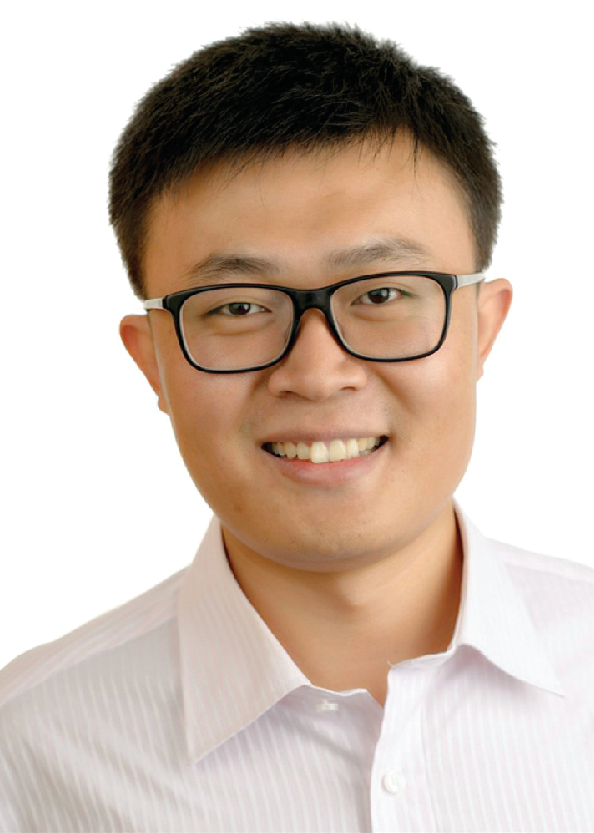}}]
{He (Henry) Chen} (S'10-M'16) received a Ph.D. degree in Electrical Engineering from The University of Sydney, Sydney, Australia, in 2015. He was a Research Fellow with the School of Electrical and Information Engineering, The University of Sydney before he joined the Department of Information Engineering at the Chinese University of Hong Kong as a faculty member, where he is now a Research Assistant Professor.

Dr. Chen is current research interests are in the field of industrial Internet of Things, with a particular focus on ultra-reliable low-latency wireless for industrial automation, timely status update in real-time monitoring, and time-sensitive networking (TSN).
\end{IEEEbiography}

\begin{IEEEbiography} [{\includegraphics[width=1in,height=1.25in,clip,keepaspectratio]{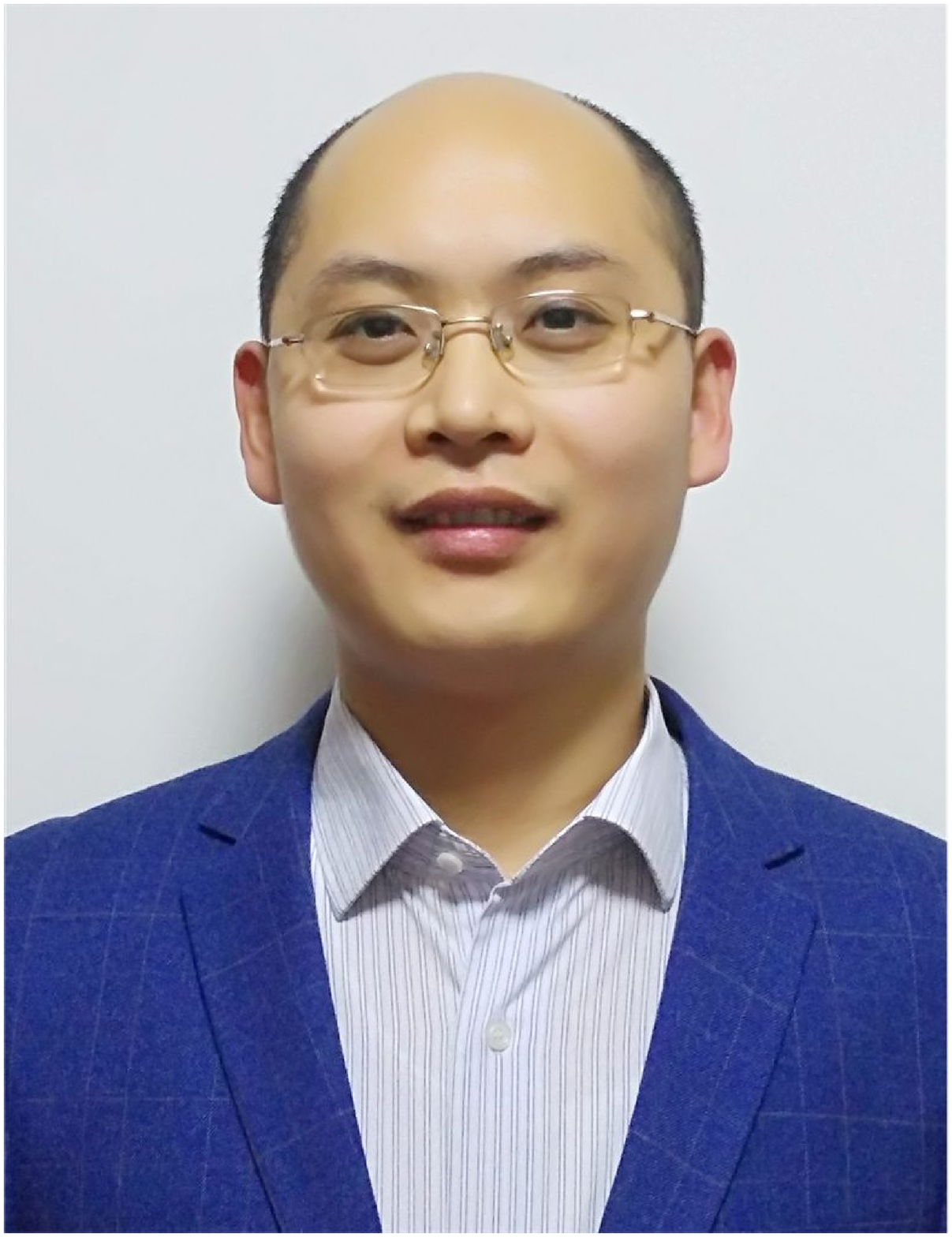}}]
{Chao Zhai} (S'12-M'15) received the B.Sc. degree in communication engineering and the M.Sc. degree in communication and information systems from Shandong University, Jinan, China, in 2007 and 2010, respectively, and the Ph.D. degree in electrical engineering from the University of New South Wales, Sydney, Australia, in 2013. From 2011 to 2012, he was a Visiting Ph.D. Student with The Chinese University of Hong Kong. In 2013, he was a Visiting Ph.D. student with The Hong Kong University of Science and Technology. From 2014 to 2016, he held a post-doctoral position at Shandong University, where he is currently with the School of Information Science and Engineering, Shandong University, as an Associate Researcher. His research interests include high-efficient transmissions in heterogeneous networks, massive MIMO, spectrum sharing in cognitive radio networks, and energy harvesting wireless communications.
\end{IEEEbiography}

\begin{IEEEbiography} [{\includegraphics[width=1in,height=1.25in,clip,keepaspectratio]{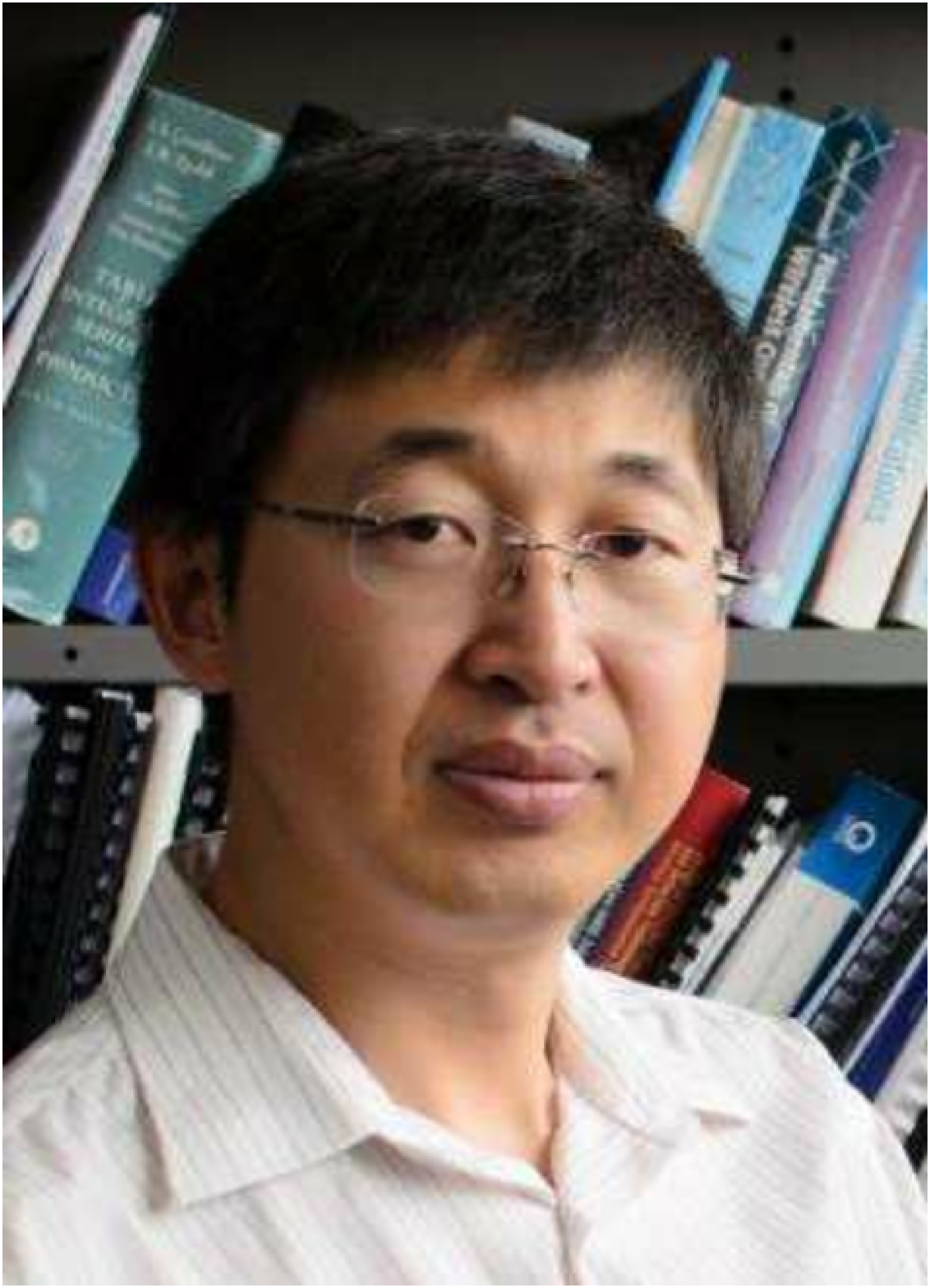}}]
{Yonghui Li} (M'04-SM'09-F'19) received his PhD degree in November 2002 from Beijing University of Aeronautics and Astronautics. From 1999 to 2003, he was affiliated with Linkair Communication Inc, where he held a position of project manager with responsibility for the design of physical layer solutions for the LAS-CDMA system. Since 2003, he has been with the Centre of Excellence in Telecommunications, the University of Sydney, Australia. He is now a Professor in School of Electrical and Information Engineering, University of Sydney. He is the recipient of the Australian Queen Elizabeth II Fellowship in 2008 and the Australian Future Fellowship in 2012.

His current research interests are in the area of wireless communications, with a particular focus on MIMO, millimeter wave communications, machine to machine communications, coding techniques and cooperative communications. He holds a number of patents granted and pending in these fields. He is now an editor for IEEE transactions on communications and IEEE transactions on vehicular technology. He was also the guest editor for IEEE JSAC Special issue on Millimeter Wave Communications for Future Mobile Networks. He received the best paper awards from IEEE International Conference on Communications (ICC) 2014, IEEE PIMRC 2017, and IEEE Wireless Days Conferences (WD) 2014. He is Fellow of IEEE.
\end{IEEEbiography}

\begin{IEEEbiography} [{\includegraphics[width=1in,height=1.25in,clip,keepaspectratio]{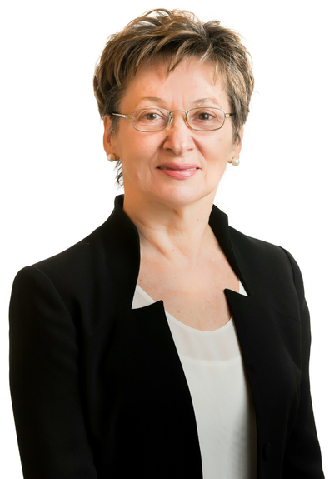}}]
{Branka Vucetic} (SM'00-F'03) is an ARC Laureate Fellow and Professor of Telecommunications, Director of the Centre of Excellence in Telecommunications at the University of Sydney. During her career she has held research and academic positions in Yugoslavia, Australia, UK and China. Her research interests include coding, communication theory and signal processing and their applications in wireless networks and industrial internet of things. Prof Vucetic co-authored four books and more than four hundred papers in telecommunications journals and conference proceedings. She is a Fellow of the Australian Academy of Technological Sciences and Engineering and a Fellow of the IEEE.
\end{IEEEbiography}
\end{document}